\theoremstyle{plain}
\newtheorem{theorem}{Theorem}[section]
\newtheorem{lemma}[theorem]{Lemma}
\theoremstyle{definition}
\newtheorem{remark}[theorem]{Remark}
\newtheorem{example}[theorem]{Example}
\newtheorem{assumption}[theorem]{Assumption}
\theoremstyle{remark}
\renewenvironment{thebibliography}[1]{%
\begin{oldthebibliography}{#1}%
\setlength{\baselineskip}{.9em}
\linespread{1}
\small
\setlength{\parskip}{0.2ex}%
\setlength{\itemsep}{.40em}%
}%
{%
\end{oldthebibliography}%
}
\newcommand{\q}{\quad}
\newcommand{\eps}{\varepsilon}
\newcommand{\N}{\mathbb{N}}
\newcommand{\Q}{\mathbb{Q}}
\newcommand{\R}{\mathbb{R}}
\newcommand{\cB}{\mathcal{B}}
\newcommand{\cD}{\mathcal{D}}
\newcommand{\cF}{\mathcal{F}}
\newcommand{\cH}{\mathcal{H}}
\newcommand{\cP}{\mathcal{P}}
\newcommand{\fP}{\mathfrak{P}}
\DeclareMathOperator{\dom}{dom}
\DeclareMathOperator{\graph}{graph}
\DeclareMathOperator{\ri}{ri}
\DeclareMathOperator{\NA}{NA}
\DeclareMathOperator{\supp}{supp}
\DeclareMathOperator{\linspan}{span}
\newcommand{\qs}{\mbox{-q.s.}}
\newcommand{\1}{\mathbf{1}}
\newcommand{\sint}{\stackrel{\mbox{\tiny$\bullet$}}{}}
\newcommand{\DS}{\Delta S}
\numberwithin{equation}{section}
\begin{document}

\title{\vspace{-.3em}
Utility Maximization under Model Uncertainty\\in Discrete Time
\date{\today}
\author{
  Marcel Nutz%
  \thanks{
  Department of Mathematics, Columbia University, mnutz@math.columbia.edu. Research supported by NSF Grant DMS-1208985.
  }
 }
}
\maketitle \vspace{-1em}

\begin{abstract}
We give a general formulation of the utility maximization problem under nondominated model uncertainty in discrete time and show that an optimal portfolio exists for any utility function that is bounded from above. In the unbounded case, integrability conditions are needed as nonexistence may arise even if the value function is finite.
\end{abstract}

\vspace{.9em}

{\small
\noindent \emph{Keywords} Utility maximization; Knightian uncertainty; Nondominated model

\noindent \emph{AMS 2000 Subject Classification}
91B28; %
93E20; %
49L20 %

\section{Introduction}\label{se:intro}

We study a robust utility maximization problem of the form
\begin{equation}\label{eq:eq:intro}
  u(x)=\sup_{H\in\cH_x} \inf_{P\in\cP} E_P[U(x+H\sint S_T)]
\end{equation}
in a discrete-time financial market. Here $S$ is the stock price process and $x+H\sint S_T$ is the agent's wealth at the time horizon $T$ resulting from the given initial capital $x\geq0$ and trading according to the portfolio $H$. Moreover, $U$ is a utility function defined on the positive half-line and $\cP$ is a set of probability measures. Hence, the agent attempts to choose a portfolio $H$ from the set $\cH_x$ of all admissible strategies such as to maximize the worst-case expected utility over the set $\cP$ of possible models. A distinct feature of our problem is that $\cP$ may be nondominated in the sense that there exists no reference probability measure with respect to which all $P\in\cP$ are absolutely continuous.

We show in a general setting that an optimal portfolio $\hat{H}\in\cH_x$ exists for any utility function that is bounded from above. In the classical theory where $\cP$ is a singleton, existence holds also in the unbounded case, under the condition that $u(x)<\infty$. This is not true in our setting; we exhibit counterexamples for any utility function unbounded from above. Positive results can be obtained under suitable integrability assumptions.

Our basic model for the financial market is the one proposed in \cite{BouchardNutz.13}; in particular, we use their no-arbitrage condition $\NA(\cP)$. It states that for any portfolio $H$ which is riskless in the sense that $H\sint S_T\geq0$ holds $\cP$-quasi-surely ($P$-a.s.\ for all $P\in\cP$), it follows that $H\sint S_T=0$ $\cP$-quasi-surely. While this is not the only possible choice of a viability condition (see e.g.~\cite{AcciaioBeiglbockPenknerSchachermayer.12, CoxObloj.11}), an important motivation for our study is to show that $\NA(\cP)$ implies the existence of optimal portfolios in a large class of utility maximization problems, which we consider a desirable feature for a no-arbitrage condition. Thus, to state a clear message in this direction, we aim at a general result in an abstract setting. (It will be clear from the examples that the possible nonexistence for unbounded $U$ has little to do with no-arbitrage considerations.)

The main difficulty in the nondominated case is the failure of the so-called Komlos-type arguments that are used extensively e.g.\ in \cite{DelbaenSchachermayer.94, KramkovSchachermayer.99}.
We shall instead use dynamic programming, basically along the lines of \cite{RasonyiStettner.05, RasonyiStettner.06} but of course without a reference measure, to reduce to the case of a one-period market with deterministic initial data. Such a market is still nondominated, but the set of portfolios is a subset of a Euclidean space where it is easy to obtain compactness from the no-arbitrage condition. The passage from the one-period markets to the original market is achieved via measurable selections and in particular the theory of lower semianalytic functions; it draws from \cite{BouchardNutz.13, Nutz.10Gexp, NutzVanHandel.12, SonerTouziZhang.2010dual}.

The utility maximization problem for a singleton $\cP$ has a long and rich history in mathematical finance; we refer to \cite[Section~2]{FollmerSchied.04} or~\cite[Section~3]{KaratzasShreve.98} for background and references. In the discrete-time case, the most general existence result was obtained in~\cite{RasonyiStettner.06}; it establishes the existence of an optimal portfolio under the standard no-arbitrage condition $\NA$ (which is the same as our $\NA(\cP)$ when $\cP$ is a singleton) for any concave nondecreasing function $U$, under the sole assumption that $u(x)<\infty$.
Robust or ``maxmin''-criteria as in~\eqref{eq:eq:intro} are classical in decision theory; the systematic analysis goes back at least to Wald (see the survey~\cite{Wald.50}). A solid axiomatic foundation was given in modern economics; a landmark paper in this respect is~\cite{GilboaSchmeidler.89}. Most of the literature on the robust utility maximization in mathematical finance, starting with~\cite{Quenez.04} and~\cite{Schied.06}, assumes that the set $\cP$ is dominated by a reference measure $P_*$; we refer to~\cite{Schied.06} for an extensive survey. While this assumption has no clear foundation from an economic or decision-theoretic point of view, it is mathematically convenient and in particular allows to work within standard probability theory. The nondominated problem is quite different in several respects; for instance, finding a worst-case measure does not lead to an optimal portfolio in general.

To the best of our knowledge, there are two previous existence results for optimal portfolios in the nondominated robust utility maximization problem, both in the context of price processes with continuous paths and restricted to specific utility functions (power, logarithm or exponential). In~\cite{TevzadzeToronjadzeUzunashvili.13}, a factor model is studied and the model uncertainty extends over a compact (deterministic) set of possible drift and volatility coefficients, which leads to a Markovian problem that is tackled by solving an associated Hamilton--Jacobi--Bellman--Isaacs partial differential equation under suitable conditions. In~\cite{MatoussiPossamaiZhou.12utility}, the uncertainty is over a compact set of volatility coefficients whereas the drift is known, but possibly non-Markovian. The problem is tackled by solving an associated second order backward stochastic differential equation under suitable conditions. We mention that in both cases, the involved set of measures is relatively compact and the volatilities are supposed to be uniformly nondegenerate, which acts as an implicit no-arbitrage condition.
In~\cite{DenisKervarec.13}, the authors consider a more general form of uncertainty about drift and volatility over a compact set of measures and a general (bounded) utility function. They establish a minimax result and the existence of a worst-case measure under the assumption that each $P\in\cP$ admits an equivalent martingale measure. Turning to different but related nondominated problems in the mathematical finance literature, \cite{BayraktarHuang.13} studies robust maximization of asymptotic growth under covariance uncertainty, \cite{FernholzKaratzas.11} analyzes optimal arbitrage under model uncertainty and \cite{Cont.06} and \cite{TalayZheng.02} consider risk management under model uncertainty. The robust superhedging problem has been studied in several settings; see \cite{AcciaioBeiglbockPenknerSchachermayer.12, BeiglbockHenryLaborderePenkner.11, BouchardNutz.13, DenisMartini.06, DolinskySoner.12, GalichonHenryLabordereTouzi.11, NeufeldNutz.12, Peng.10,SonerTouziZhang.2010rep, SonerTouziZhang.2010dual}, among others.

The remainder of this note is organized as follows. In Section~\ref{se:onePeriod}, we study in detail the case of a one-period market, establish existence under an integrability condition for $U^+$, and discuss how nonexistence can arise when $U$ is unbounded from above. In Section~\ref{se:multiPeriod}, consider the multi-period case and mainly focus on the case where $U$ is bounded from above (Theorem~\ref{th:multiperiodBdd}), as this seems to be the only case allowing for a general theory without implicit assumptions. The case of unbounded $U^+$ is discussed in a restricted setting (Example~\ref{ex:bddS}).

\section{The One-Period Case}\label{se:onePeriod}

Let $(\Omega,\cF)$ be a measurable space. We consider a stock price process given by a deterministic vector $S_0\in\R^d$ and an $\cF$-measurable, $\R^d$-valued random vector $S_1$; we write $\Delta S$ for $S_1-S_0$. In this setting, a portfolio is a deterministic vector $h\in\R^d$ and the corresponding gain from trading is given by the inner product $h\DS=\sum_{i=1}^d h^{i} \Delta S^i$.
We are given a nonempty convex set $\cP$ of probability measures on $\cF$. A subset $A\subseteq \Omega$ is called $\cP$-polar if $A\subseteq A'$ for some $A'\in\cF$ satisfying $P(A')=0$ for all $P\in\cP$, and a property is said to hold $\cP$-quasi surely or $\cP$-q.s.\ if it holds outside a $\cP$-polar set. Given some initial capital $x\geq0$, the set of admissible portfolios is defined by
\[
  D_x:=\{h\in\R^d:\, x+h\Delta S \geq 0 \; \cP\qs\}.
\]
We shall work under the no-arbitrage condition $\NA(\cP)$ of \cite{BouchardNutz.13}; that is, given $h\in\R^d$,
\begin{equation}\label{eq:NA1}
  h\DS \geq 0 \quad\cP\qs \q\q\mbox{implies}\q\q h\DS = 0 \quad\cP\qs
\end{equation}
As a preparation for the analysis of the multi-period case, we consider a utility function $U$ that can be random. The following condition is in force throughout this section.

\begin{assumption}\label{as:U1}
  The function $U: \Omega\times [0,\infty)\to[-\infty,\infty)$ is such that
  \begin{enumerate}
    \item $\omega\mapsto U(\omega,x)$ is $\cF$-measurable and bounded from below for each $x>0$,
    \item $x\mapsto U(\omega,x)$ is nondecreasing, concave and continuous for each $\omega\in\Omega$.
  \end{enumerate}
\end{assumption}

In particular, $U$ is finite-valued on $\Omega\times (0,\infty)$, while $U(\omega,0)=\lim_{x\downarrow 0}U(\omega,x)$ may be infinite. We shall sometimes omit the first argument and write $U(x)$ for $U(\omega,x)$. Moreover, we define  $U(x):=-\infty$ for $x<0$ and note that $U$ is $\cF\otimes \cB(\R)$-measurable as a consequence of a standard result on Carath\'eodory functions \cite[Lemma~4.51, p.\,153]{AliprantisBorder.06}.
We can now state the main result of this section.

\begin{theorem}\label{th:existence1underInt}
  Let $\NA(\cP)$ hold and $x\geq0$. Assume that
  \begin{equation}\label{eq:int1}
    E_P[U^+(x+h\Delta S)]<\infty\quad \mbox{for all}\quad h\in D_x\mbox{ and }P\in\cP.
  \end{equation}
  Then
  \[
    u(x):=\sup_{h\in D_x} \inf_{P\in\cP} E_P[U(x+h\Delta S)] < \infty
  \]
  and there exists $\hat{h}\in D_x$ such that
  $
    \inf_{P\in\cP} E_P[U(x+\hat{h}\Delta S)] =  u(x).
  $
\end{theorem}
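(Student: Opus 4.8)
The plan is to establish the result via a compactness argument in $\R^d$ combined with a Fatou-type lower semicontinuity for the infimum over $\cP$. First I would reduce to portfolios in a well-behaved subspace: under $\NA(\cP)$, if $h\DS = 0$ $\cP$-q.s.\ then $h$ contributes nothing to the objective, so without loss of generality we may restrict attention to portfolios $h$ lying in the orthogonal complement $\cN^\perp$ of the linear space $\cN := \{h\in\R^d : h\DS = 0 \ \cP\text{-q.s.}\}$. On this subspace, $\NA(\cP)$yields the quantitative statement that for any sequence $h_n\in D_x\cap\cN^\perp$ with $|h_n|\to\infty$, one cannot have $x + h_n\DS\ge 0$ $\cP$-q.s.\ for all $n$ in a way that survives normalization — concretely, the normalized vectors $h_n/|h_n|$ would converge (along a subsequence) to some unit vector $\bar h\in\cN^\perp$ with $\bar h\DS\ge 0$ $\cP$-q.s., forcing $\bar h\DS = 0$ $\cP$-q.s., i.e.\ $\bar h\in\cN\cap\cN^\perp = \{0\}$, a contradiction. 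Hence $D_x\cap\cN^\perp$ is compact.

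Next I would address finiteness of $u(x)$. Pick any fixed $P_0\in\cP$; then $\inf_{P\in\cP}E_P[U(x+h\DS)] \le E_{P_0}[U(x+h\DS)] \le E_{P_0}[U^+(x+h\DS)] < \infty$ by \eqref{eq:int1}, but to bound the supremum uniformly in $h$ I expect to need the concavity of $U$ together with the compactness just established: on the compact set $D_x\cap\cN^\perp$ one controls $x+h\DS$ and, using a bound of the form $U(x+h\DS)\le U(x) + (\text{slope at }x)\cdot(h\DS)$ from concavity (valid where $U(x)$ is finite, i.e.\ $x>0$; the case $x=0$ needs a small separate argument since only $h=0$ may be admissible, or one approximates), together with integrability of $h\DS$ which follows since $h\DS \ge -x$ $\cP$-q.s.\ and $E_P[(h\DS)^+] \le E_P[U^+(\cdots)]/(\text{slope}) + \text{const} <\infty$. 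This gives $u(x)<\infty$.

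For existence, take a maximizing sequence $h_n\in D_x\cap\cN^\perp$; by compactness pass to a subsequence $h_n\to\hat h\in D_x\cap\cN^\perp$. The objective $h\mapsto\inf_{P\in\cP}E_P[U(x+h\DS)]$ is concave (infimum of concave functions, since $h\mapsto E_P[U(x+h\DS)]$ is concave by concavity of $U$ and linearity of $h\mapsto h\DS$), hence upper semicontinuous on the interior of its domain; but to handle boundary behavior and the infimum I would argue directly: for each fixed $P$, apply Fatou's lemma to the sequence $U(x+h_n\DS)$, which is bounded above in $L^1(P)$ by \eqref{eq:int1} after noting the concavity bound makes $\{U^+(x+h_n\DS)\}_n$ uniformly $P$-integrable (dominated by an affine function of the convergent, hence $L^1$-bounded, sequence $h_n\DS$), to get $\limsup_n E_P[U(x+h_n\DS)] \le E_P[U(x+\hat h\DS)]$; taking infimum over $P$ and then $\limsup$ over $n$ on the left yields $u(x) = \limsup_n\inf_P E_P[U(x+h_n\DS)] \le \inf_P E_P[U(x+\hat h\DS)]$, so $\hat h$ is optimal.

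The main obstacle I anticipate is the interchange of limits when passing from the maximizing sequence to $\hat h$ while an infimum over the possibly non-dominated, non-compact family $\cP$ sits inside the expectation: one cannot naively swap $\limsup_n$ and $\inf_P$. The resolution is the asymmetric direction used above — $\inf_P$ on the outside only helps when bounding from below, so one proves $\limsup_n E_P[U(x+h_n\DS)] \le E_P[U(x+\hat h\DS)]$ for each $P$ separately (pure single-measure Fatou, no uniformity in $P$ needed), then takes $\inf_P$; the other inequality $\inf_P E_P[U(x+\hat h\DS)] \le u(x)$ is immediate from the definition of the supremum. A secondary technical point is the uniform integrability of the positive parts along the sequence, which I would extract from the concavity/affine-majorant bound rather than from any uniformity across $\cP$; and the degenerate case $x=0$, where $U(\omega,0)$ may be $-\infty$, should be checked to ensure the objective is not identically $-\infty$ (if it is, any $\hat h\in D_0$ works trivially).
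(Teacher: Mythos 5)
Your overall architecture coincides with the paper's: reduce to the compact set $K_x=D_x\cap L$ (your $\cN$ is exactly $L^\bot$ by Lemma~\ref{le:orthogonal}, and your normalization argument is Lemma~\ref{le:admCompact}), show $h\mapsto E_P[U(x+h\Delta S)]$ is upper semicontinuous for each fixed $P$ via a reverse Fatou argument, observe that an infimum of upper semicontinuous functions is upper semicontinuous, and conclude that the supremum is attained on the compact set. However, there is a genuine gap in the one step that carries the technical weight: the construction of an integrable majorant for $U^+(x+h_n\Delta S)$ that is uniform in $n$. You propose to dominate $U^+(x+h\Delta S)$ by an affine function of $h\Delta S$ via concavity and then to extract $E_P[(h\Delta S)^+]<\infty$ from \eqref{eq:int1} by ``dividing by the slope.'' This inequality goes the wrong way: concavity gives $U(x+y)\leq U(x)+U'(x)\,y$, i.e.\ an \emph{upper} bound on $U^+$ in terms of $y^+$, whereas bounding $E_P[(h\Delta S)^+]$ by $E_P[U^+(x+h\Delta S)]$ would require a \emph{lower} bound of the form $U^+(x+y)\geq c\,y-C$ with $c>0$, which fails whenever the asymptotic slope of $U(\omega,\cdot)$ vanishes (e.g.\ $U=\log$, $U=\sqrt{\cdot}$, or any $U$ bounded from above --- in the last case \eqref{eq:int1} holds trivially while $(h\Delta S)^+$ may fail to be $P$-integrable for every $P$). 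Consequently neither the $L^1$-boundedness of $h_n\Delta S$ nor the uniform integrability of $U^+(x+h_n\Delta S)$ is established, and without a common integrable majorant the reverse Fatou step $\limsup_n E_P[U(x+h_n\Delta S)]\leq E_P[U(x+\hat h\Delta S)]$ is unjustified; the same defect undermines your separate argument for $u(x)<\infty$.

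The paper closes exactly this gap with Lemma~\ref{le:upperBound}: since $K_x$ is compact, the cone generated by $K_x$ is spanned by finitely many vectors $g_1,\dots,g_N$, every $h\in K_x$ satisfies $x+h\Delta S\leq x+c\max\{0,g_1\Delta S,\dots,g_N\Delta S\}$ for a constant $c$ independent of $h$, and the scaling inequality \eqref{eq:scalingIneq} of R\'asonyi--Stettner transfers the integrability granted by \eqref{eq:int1} from suitable convex combinations $\tilde g_i\in K_x\subseteq D_x$ to the (possibly inadmissible) vectors $cg_i$. This yields a single random variable $Y$ with $E_P[Y]<\infty$ for all $P\in\cP$ dominating $U^+(x+h\Delta S)$ for \emph{all} $h\in D_x$ simultaneously, which is what both the upper semicontinuity and the finiteness of $u(x)$ actually require. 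Some such argument exploiting that the bound is needed only over the compact $K_x$ (rather than your pointwise slope manipulation) is indispensable here.
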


The proof is stated below. Due to the infimum over $\cP$, the optimal portfolio $\hat{h}$ is \emph{not unique} in general, even if $U$ is strictly concave and there are no redundant assets.
The following counterexample shows that the integrability condition~\eqref{eq:int1} cannot be dropped: existence of an optimal portfolio may fail even if $u(x)<\infty$.

\begin{example}\label{ex:nonexistence2d}
  Let $d=2$ and let $U$ be a deterministic function which is strictly concave and \emph{unbounded} from above. Moreover, let $x>0$, $\Omega=\R^2$, $S_0\equiv (1,1)$ and $\Delta S (\omega) = \omega$ for all $\omega\in\R^2$. Let $P_1$ be the probability measure on $\R^2$ such that $\DS^1$ and $\DS^2$ are independent and
  \begin{align*}
    P_1\{\DS^1=-1\}&=  P_1\{\DS^1=1\}=1/2, \\
    P_1\{\DS^2=-1\}&= P_1\{\DS^2=2\}=1/2.
  \end{align*}
  Moreover, let $P_2$ be a second probability such that $\DS^1$ and $\DS^2$ are independent under $P_2$ and
  \begin{align*}
      &P_2\{\DS^1=-1\}=1/2, \quad P_2\{\DS^1\geq0\}=1/2,\quad E_{P_2}[U^+((\DS^1)^+)]=\infty, \\
      &P_2\{\DS^2=0\}=1.
  \end{align*}
  Let $\cP$ be the convex hull of $\{P_1,P_2\}$. Then $\NA(\cP)$ holds and $u(x)<\infty$, but there exists no optimal portfolio $\hat{h}\in D_x$.
\end{example}

\begin{proof}
  We consider the case $U(0)>-\infty$; the case $U(0)=-\infty$ is similar with minor modifications. It is elementary to check that $\NA(\cP)$ holds. We observe that $S^1$ is a martingale under $P_1$, whereas $S^2$ has a strictly positive rate of return. It then follows that the problem
  \[
     u_1(x):=\sup_{h\in\R^2}E_{P_1}[U(x+h\DS)]
  \]
  admits a unique optimal portfolio $\hat{g}$ of the form $\hat{g}=(0,\hat{g}^2)$ for some $\hat{g}^2>0$. Moreover, we have $u(x)\leq u_1(x)<\infty$.
  Suppose that $\hat{h}=(\hat{h}^1,\hat{h}^2)\in D_x$ is an optimal portfolio for $\cP$. As $\DS^1$ is unbounded from above under $P_2$, $\hat{h}\in D_x$ implies $\hat{h}^1\geq0$. Moreover, using that $E_{P_2}[U(x+h\DS)]=\infty$ whenever $h=(h^1,h^2)\in D_x$ satisfies $h^1>0$,
  we see that
  \[
    \inf_{P\in\cP} E_P[U(x+h\DS)]=
    \begin{cases}
      E_{P_1}[U(x+h\DS)] & \text{if } h^1>0, \\
      E_{P_1}[U(x+h^2\DS^2)]\wedge U(x) & \text{if } h^1=0.
    \end{cases}
  \]
  Since $E_{P_1}[U(x+h\DS)]>U(x)$ if $h^1>0$ is close to zero and $h^2=\hat{g}^2$, we must have $\hat{h}^1>0$. However, a direct calculation shows that the function
  \[
    h^1\mapsto E_{P_1}[U(x+h^1\DS^1 + \hat{h}^2\DS^2)]
  \]
  is strictly decreasing in $h_1\geq0$, so that the maximum cannot be attained at a strictly positive number.
\end{proof}

\begin{remark}
  In the notation of Example~\ref{ex:nonexistence2d} and its proof, define the random utility function $\tilde{U}(\omega,x):=U(x+ \hat{g}^2\Delta S^2(\omega))$ and consider only $S^1$ as a tradable asset. Then the above arguments again imply the nonexistence of an optimal portfolio.
\end{remark}

The previous discussion leaves open the case where $U$ is deterministic and $d=1$. Somewhat curiously, we have the following positive result; the proof is stated at the end of this section.

\begin{remark}\label{rk:1dimExistence}
  Assume that $d=1$ and that $U$ is deterministic with $U(0)>-\infty$. If $\NA(\cP)$ holds, $x\geq0$ and $u(x)<\infty$, then there exists $\hat{h}\in D_x$ such that\footnote{Note that all the involved expectations are well defined due to $U(0)>-\infty$.}
  $
    \inf_{P\in\cP} E_P[U(x+\hat{h}\Delta S)] = u(x).
  $
\end{remark}

\subsection{Proofs}
Let us now turn to the proofs of Theorem~\ref{th:existence1underInt} and Remark~\ref{rk:1dimExistence}.
We fix the initial capital $x\geq0$; moreover, $\NA(\cP)$ is always in force. Some additional notation is needed. The quasi-sure support $\supp_{\cP} (\Delta S)$ is defined as the smallest closed subset $A$ of $\R^d$ such that $P\{\Delta S \in A\}=1$ for all $P\in\cP$; cf.\ \cite{BouchardNutz.13}. We can then introduce
\[
  L:=\linspan \supp_{\cP} (\Delta S) \subseteq \R^d,
\]
the smallest linear subspace of $\R^d$ containing $\supp_{\cP} (\Delta S)$. In view $\NA(\cP)$, the fundamental theorem of asset pricing in the form of \cite[Theorem~3.1]{BouchardNutz.13} shows that the origin is in the closed convex hull of $\supp_{\cP} (\Delta S)$ and thus $L$ coincides with the affine hull of $\supp_{\cP} (\Delta S)$. The orthogonal complement
\[
  L^\bot:=\{h\in\R^d:\, hv=0\mbox{ for all }v\in L\}
\]
is the nullspace of $\DS$ in the sense of the following lemma, which entails that projecting onto $L$ eliminates any redundancy between portfolios.

\begin{lemma}\label{le:orthogonal}
  Given $h\in\R^d$, we have $h\in L^\bot$ if and only if $h\Delta S=0$ $\cP$-q.s.
\end{lemma}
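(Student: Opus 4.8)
The plan is to identify the quasi-sure nullspace of $\DS$ with the hyperplane orthogonal to $h$, passing through the quasi-sure support. Fix $h\in\R^d$ and set $N_h:=\{v\in\R^d:\,hv=0\}$, which is a closed linear subspace of $\R^d$ (equal to $\R^d$ when $h=0$). Two elementary equivalences will drive the argument. First, $h\DS=0$ $\cP$-q.s.\ if and only if $P\{\DS\in N_h\}=1$ for every $P\in\cP$: indeed $\{h\DS\neq 0\}=(\DS)^{-1}(\R^d\setminus N_h)\in\cF$ because $N_h$ is closed and $\DS$ is $\cF$-measurable, so this set is $\cP$-polar exactly when it is $P$-null for every $P\in\cP$. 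Second, $h\in L^\bot$ if and only if $L\subseteq N_h$, and since $L=\linspan\supp_{\cP}(\DS)$ and $N_h$ is a linear subspace, this holds if and only if $\supp_{\cP}(\DS)\subseteq N_h$.

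Given these observations, the two directions are immediate. If $h\in L^\bot$, then $\supp_{\cP}(\DS)\subseteq N_h$, so $P\{\DS\in N_h\}=1$ for every $P\in\cP$ by the very definition of the quasi-sure support, i.e.\ $h\DS=0$ $\cP$-q.s. Conversely, if $h\DS=0$ $\cP$-q.s., then $P\{\DS\in N_h\}=1$ for every $P\in\cP$; since $N_h$ is closed, the minimality in the definition of $\supp_{\cP}(\DS)$ as the smallest closed set carrying full mass under all $P\in\cP$ forces $\supp_{\cP}(\DS)\subseteq N_h$, whence $h\in L^\bot$.

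I do not expect a genuine obstacle here. The only thing requiring care is the routine bookkeeping that translates the phrase ``$\cP$-q.s.'' into the existence of the defining $\cP$-polar set and checks that the relevant preimages are $\cF$-measurable (using that $N_h$ is closed and $\DS$ is $\cF$-measurable), together with the trivial remark that the case $h=0$ is covered automatically. The substance of the lemma is merely that ``$h\DS=0$ $\cP$-quasi-surely'' and ``$\supp_{\cP}(\DS)$ is contained in the hyperplane $\{v:hv=0\}$'' are two ways of phrasing the same fact.
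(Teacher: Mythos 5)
Your proof is correct and follows essentially the same route as the paper: both directions reduce to the definition of the quasi-sure support $\supp_{\cP}(\DS)$ and the identity $\{h\DS=0\}=\{\DS\in N_h\}$. The only (harmless) difference is in the converse: the paper argues by contrapositive, picking a support point $v$ with $hv\neq 0$ and using that some $P\in\cP$ charges every neighborhood of $v$, whereas you invoke minimality directly by noting that the closed subspace $N_h$ carries full mass under every $P\in\cP$ and hence contains the support --- a slightly more streamlined use of the same minimality property.
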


\begin{proof}
  If $h\in L^\bot$, then $P\{h\DS=0\}=P\{\DS \in L\}=1$ for all $P\in\cP$ and hence $h\Delta S=0$ $\cP$-q.s.
  Conversely, let $h\notin L^\bot$; then there exists $v\in \supp_{\cP}(\DS)$ such that $hv\neq0$, and thus $hv'\neq0$ for all $v'$ in an open neighborhood $B(v)$ of $v$. By the minimality property of the support, it follows that there exists $P\in \cP$ such that $P\{\DS\in B(v)\} >0$. Therefore, $P\{h\DS\neq0\}>0$ and $h$ does not satisfy $h\Delta S=0$ $\cP$-q.s.
\end{proof}

The following compactness property is an important consequence of $\NA(\cP)$.

\begin{lemma}\label{le:admCompact}
  The set $K_x:=D_x \cap L \subseteq \R^d$ is convex, compact and contains the origin.
\end{lemma}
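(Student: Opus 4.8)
The plan is to verify each of the three claimed properties separately, deriving compactness from $\NA(\cP)$. Convexity is immediate: $D_x$ is an intersection of half-spaces (the condition $x+h\DS\geq0$ $\cP$-q.s.\ is a convex constraint on $h$) and $L$ is a linear subspace, so $K_x=D_x\cap L$ is convex. That $0\in K_x$ is also immediate, since $0\in L$ trivially and $x\geq0$ gives $x+0\cdot\DS=x\geq0$, so $0\in D_x$. Since $K_x$ is a subset of the finite-dimensional space $L$, to get compactness it suffices to show $K_x$ is closed and bounded.

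For closedness, the plan is to argue that $D_x$ is closed in $\R^d$: if $h_n\to h$ with $h_n\in D_x$, then for each $P\in\cP$ we have $x+h_n\DS\geq0$ $P$-a.s., and passing to a $P$-a.s.\ limit (along a subsequence, or just directly since $h_n\DS\to h\DS$ pointwise on $\Omega$) yields $x+h\DS\geq0$ $P$-a.s.; taking the union of the exceptional sets over a suitable covering shows $h\in D_x$. Intersecting with the closed subspace $L$, we get that $K_x$ is closed.

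The main obstacle is boundedness, and this is where $\NA(\cP)$ enters. The plan is the standard homogeneity-and-normalization argument: suppose for contradiction that $K_x$ is unbounded, so there exist $h_n\in K_x$ with $|h_n|\to\infty$; set $g_n:=h_n/|h_n|\in L$, which lie on the unit sphere, so after passing to a subsequence $g_n\to g$ with $|g|=1$ and $g\in L$ (as $L$ is closed). For each $n$ and each $P\in\cP$ we have $x+h_n\DS\geq0$ $P$-a.s., hence $x/|h_n|+g_n\DS\geq0$ $P$-a.s.; letting $n\to\infty$ gives $g\DS\geq0$ $\cP$-q.s. By $\NA(\cP)$ (in the form~\eqref{eq:NA1}) this forces $g\DS=0$ $\cP$-q.s., and then Lemma~\ref{le:orthogonal} gives $g\in L^\bot$. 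But $g\in L$ as well, so $g\in L\cap L^\bot=\{0\}$, contradicting $|g|=1$. Hence $K_x$ is bounded, and being also closed, it is compact.

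The only delicate point in executing this is handling the quasi-sure passages to the limit cleanly: one must be careful that "$x+h_n\DS\geq0$ $\cP$-q.s.\ for every $n$" yields "$g\DS\geq0$ $\cP$-q.s." — but this is fine because a countable union of $\cP$-polar sets is $\cP$-polar, so on the complement of the union of the exceptional sets over all $n$ the inequalities hold simultaneously and the limit may be taken pointwise. I expect this to be the routine part; the genuine content is entirely in the use of $\NA(\cP)$ together with Lemma~\ref{le:orthogonal} to rule out escape to infinity.
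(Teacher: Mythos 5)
Your proof is correct and follows essentially the same route as the paper: convexity, closedness and $0\in K_x$ are routine, and boundedness is obtained by the same normalization/contradiction argument, where $\NA(\cP)$ forces the unit-norm limit $g$ to satisfy $g\Delta S=0$ $\cP$-q.s.\ and hence (via Lemma~\ref{le:orthogonal}) $g\in L\cap L^\bot=\{0\}$. The only cosmetic difference is that the paper places the limit point in $K_x$ itself using convexity and closedness, whereas you only need $g\in L$, which suffices equally well.
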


\begin{proof}
  It is clear that $K_x$ is convex, closed and contains the origin. Suppose for contradiction that $K_x$ is unbounded; then there are $h_n\in K_x$ such that $|h_n|\to\infty$. After passing to a subsequence, $h_n/|h_n|$ converges to a limit $h\in\R^d$. As $K_x$ is convex, we have $h_n/|h_n|\in K_x$, thus $h\in K_x$ by the closedness. Moreover, $|h|=1$. Since $h_n\Delta S \geq -x$ $\cP$-q.s.\ for all $n$, we see that
  $h\Delta S=\lim h_n \Delta S /|h_n| \geq 0$ $\cP$-q.s., which implies $h\Delta S=0$ $\cP$-q.s.\ by $\NA(\cP)$. As $h\in K_x\subseteq L$, it follows that $h=0$, contradicting that $|h|=1$.
\end{proof}

\begin{lemma}\label{le:upperBound}
  Let~\eqref{eq:int1} hold. Then there exists a random variable $Y\geq0$ satisfying $E_P[Y]<\infty$ for all $P\in\cP$ and
  \[
    U^+(x+h\Delta S) \leq Y\quad\cP\qs \quad \mbox{for all}\quad h\in D_x.
  \]
\end{lemma}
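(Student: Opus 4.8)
The plan is to set $Y:=U^+(x+M)$, where $M:=\sigma_{K_x}(\Delta S)$ and $\sigma_{K_x}(v):=\sup_{k\in K_x}\langle k,v\rangle$ is the support function of the compact set $K_x=D_x\cap L$ of Lemma~\ref{le:admCompact}. Since $K_x$ is compact, $\sigma_{K_x}\colon\R^d\to[0,\infty)$ is finite, sublinear and Lipschitz, hence continuous, so $M$ is $\cF$-measurable and, by the $\cF\otimes\cB(\R)$-measurability of $U$, $Y\ge0$ is a well-defined random variable. That $Y$ dominates all the payoffs is immediate: given $h\in D_x$, one has $h-\proj_L h\in L^\bot$, so $h\Delta S=(\proj_L h)\Delta S$ $\cP$-q.s.\ by Lemma~\ref{le:orthogonal}, while $\proj_L h\in D_x\cap L=K_x$ forces $(\proj_L h)\Delta S=\langle\proj_L h,\Delta S\rangle\le M$ pointwise; monotonicity of $U^+(\omega,\cdot)$ then gives $U^+(x+h\Delta S)\le U^+(x+M)=Y$ $\cP$-q.s. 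So the only real task is to check $E_P[Y]<\infty$ for each $P\in\cP$.

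First I would dispose of the degenerate cases. If $x=0$, then for $h\in K_0$ one has $h\Delta S\ge0$ $\cP$-q.s., hence $h\Delta S=0$ $\cP$-q.s.\ by \eqref{eq:NA1}, hence $h\in L\cap L^\bot=\{0\}$ by Lemma~\ref{le:orthogonal}; so $K_0=\{0\}$, $M=0$, and $E_P[Y]=E_P[U^+(0)]<\infty$ by \eqref{eq:int1} applied to $h=0$. The same remark covers any case with $K_x=\{0\}$, so from now on assume $x>0$ and $K_x\neq\{0\}$, and set $L':=\linspan K_x$, of dimension $m\ge1$, with orthonormal basis $e'_1,\dots,e'_m$. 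The heart of the proof is to bound $x+M$ $\cP$-q.s.\ by a fixed integer multiple of finitely many admissible payoffs. Because $K_x\subseteq L'$, $M=\sigma_{K_x}(\proj_{L'}\Delta S)\le\sigma_{K_x-K_x}(\proj_{L'}\Delta S)$, and $w\mapsto\sigma_{K_x-K_x}(w)$ is a norm on $L'$; hence $M\le c\sum_{l=1}^m|\langle e'_l,\Delta S\rangle|$ for a finite constant $c$. The symmetric convex body $K_x-K_x$ has the origin in its relative interior in $L'$, so it contains $\delta\bar B\cap L'$ for some $\delta>0$, whence $\pm\delta e'_l\in K_x-K_x\subseteq D_x-D_x$; write $\delta e'_l=a_l-b_l$ with $a_l,b_l\in D_x$. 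Now $|\langle e'_l,\Delta S\rangle|=\delta^{-1}|a_l\Delta S-b_l\Delta S|$, and using $a_l\Delta S\ge-x$, $b_l\Delta S\ge-x$ $\cP$-q.s.\ and $(a_l\Delta S)^+\le x+a_l\Delta S$ one dominates $x+M$ $\cP$-q.s.\ by $nV^{*}$, where $n$ is a fixed positive integer, $V^{*}:=\max_{0\le i\le2m}V_i$, $V_0:=x$, and $\{V_i\}_{i\ge1}$ lists the payoffs $x+a_l\Delta S$ and $x+b_l\Delta S$; note $V^{*}\ge x>0$ and each $V_i=x+h_i\Delta S$ with $h_i\in D_x$, so $E_P[U^+(V_i)]<\infty$ by \eqref{eq:int1}.

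It then remains to convert the bound $Y\le U^+(nV^{*})$ into an integrable bound, and here I would use concavity of $U(\omega,\cdot)$. If $\rho$ is a supergradient of $U(\omega,\cdot)$ at $V^{*}$, the supporting line gives $U(nV^{*})\le U(V^{*})+\rho(n-1)V^{*}$, while the chord over $[V^{*}/2,V^{*}]$ satisfies $\tfrac{2}{V^{*}}(U(V^{*})-U(V^{*}/2))\ge\rho$, so $U(nV^{*})\le(2n-1)U(V^{*})-2(n-1)U(V^{*}/2)$. Since $V^{*}/2\ge x/2>0$ and $U(\omega,\cdot)$ is nondecreasing, $U(\omega,V^{*}/2)\ge U(\omega,x/2)\ge-c_{x/2}$, where $c_{x/2}\ge0$ is a lower bound for $U(\cdot,x/2)$ as in Assumption~\ref{as:U1}(i); therefore $U^+(nV^{*})\le(2n-1)U^+(V^{*})+2(n-1)c_{x/2}\le(2n-1)\sum_{i=0}^{2m}U^+(V_i)+2(n-1)c_{x/2}$ $\cP$-q.s. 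Taking $E_P$ and applying \eqref{eq:int1} to each $V_i$ yields $E_P[Y]<\infty$. The hard part, I expect, is precisely the reduction in the second paragraph: since $K_x$ need not be a polytope it cannot be covered by the convex hull of finitely many points of $D_x$, so one is forced to pass through $D_x-D_x$ (which is possible only because $0$ lies in the relative interior of $K_x-K_x$), and this, together with the elementary estimate $U^+(nt)\le(2n-1)U^+(t)+\mathrm{const}$ coming from concavity, is what allows the argument to avoid any a priori integrability assumption on $\Delta S$ itself.
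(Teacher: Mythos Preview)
Your proof is correct and follows essentially the same strategy as the paper's: dominate $\sup_{h\in K_x}(x+h\Delta S)$ by a constant multiple of a finite maximum of admissible payoffs, then use a concavity inequality of the type $U^+(nt)\le (2n-1)U^+(t)+\text{const}$ to absorb the constant. The packaging differs slightly: the paper writes each $h\in K_x$ as $\sum\lambda_i g_i$ with $\lambda_i\ge0$ bounded, where $g_1,\dots,g_N$ generate the cone $\linspan K_x$, and then handles the fact that $cg_i\notin D_x$ by pulling toward a point $g\in\ri(K_x)$ (so that $\tilde g_i=(1-\eps)g+\eps cg_i\in K_x$) before applying the scaling inequality $\eps U^+(y)\le 2U^+(\eps y)+2U(2)$. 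Your route via $K_x-K_x$ accomplishes the same reduction---the decomposition $\delta e'_l=a_l-b_l$ with $a_l,b_l\in K_x$ is the difference-set analogue of the paper's relative-interior trick---and your use of the support function $\sigma_{K_x}$ to define $Y$ is a clean way to make the dominating variable explicit. Neither approach needs $K_x$ to be a polytope; the paper does not attempt to cover $K_x$ by a convex hull of admissible points but instead bounds $h\Delta S$ coordinatewise via the cone representation, so passing through $D_x-D_x$ is one valid choice among several rather than a forced one.
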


\begin{proof}
  The following arguments are quite similar to~\cite{RasonyiStettner.06}. As $h\Delta S=0$ $\cP$-q.s.\ for $h\in D_0$, the claim is clear for $x=0$; we suppose that $x>0$.
  By projecting onto $L$, it suffices to consider $h\in K_x$. Let $g_1,\dots,g_N\in\R^d$ be such that the convex cone generated by $g_1,\dots,g_N$ equals $\linspan K_x$, where $N$ is chosen minimally. Let $h\in K_x$; then $h=\sum_{i=1}^N \lambda_i g_i$ for some $\lambda_i\geq0$, and as $K_x$ is bounded by Lemma~\ref{le:admCompact}, there exists a constant $c\geq1$ independent of $h$ such that $|\lambda_i|\leq c/N$.
  As a result,
  \[
    x+h\Delta S = x+ \sum_{i=1}^N \lambda_i g_i\Delta S \leq x+ c\max \{0,g_1\Delta S,\dots, g_N\Delta S\}.
  \]
  Let
  \[
    Y:=U^+(x+ c\max \{0,g_1\Delta S,\dots, g_N\Delta S\})
  \]
  and fix $P\in\cP$. To show that $E_P[Y]<\infty$, it suffices to establish that
  \[
    E_P[U^+(x+ c g_i\Delta S)]<\infty
  \]
  for each $i$. To this end, fix an arbitrary $g\in \ri (K_x)$ and let $\eps\in(0,1)$ be such that $\tilde{g}_i:=g + \eps(cg_i-g)\in K_x$. In view of Assumption~\ref{as:U1}(i), by adding a constant to $U$, we may suppose that $U(1)\geq0$, and then we have the elementary inequality
  \begin{equation}\label{eq:scalingIneq}
    \eps U^+(y) \leq 2U^+(\eps y) + 2U(2),\quad y\in\R;
  \end{equation}
  see~\cite[Lemma~2]{RasonyiStettner.06}.
  Therefore,
  \begin{align*}
    \eps U^+(x+ c g_i\Delta S)
    &= \eps U^+(x+ g\Delta S+[c g_i-g]\Delta S) \\
    & \leq 2U^+(\eps [x+ g\Delta S]+\eps [c g_i-g]\Delta S) + 2U(2)\\
    & \leq 2U^+(x+ g\Delta S+\eps [c g_i-g]\Delta S) + 2U(2)\\
    & = 2U^+(x+ \tilde{g}_i\Delta S) + 2U(2)
  \end{align*}
  holds $\cP$-q.s.; namely, on the set $\{x+ g\Delta S\geq0\}$. The first term above is $P$-integrable by~\eqref{eq:int1}. The same holds for the second term; in fact, \eqref{eq:int1} immediately implies that $E_P[U^+(x)]<\infty$, and using the concavity of $U$ and $x>0$, it follows that $E_P[U^+(y)]<\infty$ for all $y\geq0$.
\end{proof}

\begin{proof}[Proof of Theorem~\ref{th:existence1underInt}]
  Lemma~\ref{le:upperBound} and Fatou's lemma imply that for all $P\in\cP$, the function $h\mapsto E_P[U(x+h\Delta S)]$ is upper semicontinuous on $D_x$. It follows that $h\mapsto\inf_{P\in\cP} E_P[U(x+h\Delta S)]$ is upper semicontinuous and thus attains its (finite) supremum on the compact set $K_x$ (Lemma~\ref{le:admCompact}). Finally, \[\sup_{h\in K_x}\inf_{P\in\cP} E_P[U(x+h\Delta S)] = \sup_{h\in D_x}\inf_{P\in\cP} E_P[U(x+h\Delta S)]\] by Lemma~\ref{le:orthogonal}.
\end{proof}

Next, we state two auxiliary results that will be used in the analysis of the multi-period case.

\begin{lemma}\label{le:countableSup}
  Let~\eqref{eq:int1} hold and let $\cD\subseteq D_x$ be dense. Then
  \[
    \sup_{h\in \cD} \inf_{P\in\cP} E_P[U(x+h\Delta S)] = u(x).
  \]
\end{lemma}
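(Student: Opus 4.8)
The inequality "$\leq$" is immediate since $\cD\subseteq D_x$, so the whole content is the reverse inequality $\sup_{h\in\cD}\inf_{P\in\cP}E_P[U(x+h\Delta S)]\geq u(x)$. Write $\phi(h):=\inf_{P\in\cP}E_P[U(x+h\Delta S)]$; by Lemma~\ref{le:upperBound} the expectation $E_P[U(x+h\Delta S)]$ is well defined in $[-\infty,\infty)$ for every $h\in D_x$ and $P\in\cP$. The plan is to show that $\phi$ is \emph{continuous} on the relative interior $\ri(D_x)$ of the convex set $D_x$ and that $\ri(D_x)$ already carries the full supremum $u(x)$; the claim then follows, because a set dense in $D_x$ meets the relatively open set $\ri(D_x)$ in a set that is dense in $\ri(D_x)$.

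The case $x=0$ is trivial: by $\NA(\cP)$ one has $h\Delta S=0$ $\cP$-q.s.\ for every $h\in D_0$, so $\phi$ is constant on $D_0$. Assume now $x>0$. First I would record that $\phi$ is concave on $D_x$: concavity of $U(\omega,\cdot)$ gives the corresponding pointwise inequality, which may be integrated since all terms are bounded above by the $P$-integrable random variable $Y$ of Lemma~\ref{le:upperBound}, and taking the infimum over $P$ preserves concavity. Second --- and this is the key point --- I claim $\phi$ is finite on $\ri(D_x)$. Indeed, if $g\in\ri(D_x)$, then, since $0\in D_x$, there is $\eps>0$ with $(1+\eps)g\in D_x$, i.e.\ $x+(1+\eps)g\Delta S\geq0$ $\cP$-q.s.; rearranging, $x+g\Delta S\geq \eps x/(1+\eps)=:\eta>0$ $\cP$-q.s. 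Hence $U(x+g\Delta S)\geq U(\cdot,\eta)$, which is bounded from below by a constant by Assumption~\ref{as:U1}(i), so $\phi(g)>-\infty$, while $\phi(g)\leq E_P[Y]<\infty$ for any fixed $P\in\cP$. Since a finite concave function on a convex subset of $\R^d$ is continuous on its relative interior (a standard fact, cf.\ \cite{AliprantisBorder.06}), $\phi$ is continuous on $\ri(D_x)$.

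Next I would show $\sup_{g\in\ri(D_x)}\phi(g)=u(x)$. Here "$\leq$" is clear. For "$\geq$", fix any $h\in D_x$ with $\phi(h)>-\infty$ (there is such an $h$, e.g.\ $h=0$, using Assumption~\ref{as:U1}(i) and Lemma~\ref{le:upperBound}; if there were none, $u(x)=-\infty$ and nothing is to be proved) and any $g_0\in\ri(D_x)$ (nonempty, as $D_x$ is a nonempty convex set). Then $(1-\lambda)g_0+\lambda h\in\ri(D_x)$ for $\lambda\in[0,1)$, and concavity gives $\phi\big((1-\lambda)g_0+\lambda h\big)\geq(1-\lambda)\phi(g_0)+\lambda\phi(h)\to\phi(h)$ as $\lambda\uparrow1$; taking the supremum over such $h$ yields $\sup_{\ri(D_x)}\phi\geq\sup_{D_x}\phi=u(x)$. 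Finally, since $\cD$ is dense in $D_x$ and $\ri(D_x)$ is relatively open in $D_x$, the set $\cD\cap\ri(D_x)$ is dense in $\ri(D_x)$; combined with the continuity of $\phi$ on $\ri(D_x)$ this gives $\sup_{h\in\cD}\phi(h)\geq\sup_{h\in\cD\cap\ri(D_x)}\phi(h)=\sup_{g\in\ri(D_x)}\phi(g)=u(x)$, as required.

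The main obstacle is that Lemma~\ref{le:upperBound} together with Fatou yields only \emph{upper} semicontinuity of $\phi$ on $D_x$ (as used in the proof of Theorem~\ref{th:existence1underInt}), which is the wrong direction for this lemma; the finiteness of $\phi$ at relative-interior points, obtained from the ``strictly positive terminal wealth'' observation above, is precisely what upgrades concavity plus upper semicontinuity to the continuity needed to pass to a dense subset.
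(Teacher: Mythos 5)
Your proof is correct and follows essentially the same route as the paper: concavity of $\phi$, the observation that points of $\ri(D_x)$ give terminal wealth bounded away from zero (via $\lambda h\in D_x$ for some $\lambda>1$) and hence $\phi>-\infty$ there, continuity of a finite concave function on the relative interior, and reduction of the supremum to the relative interior. You merely spell out a few steps the paper leaves implicit (the line-segment argument for $\sup_{\ri(D_x)}\phi=u(x)$ and the density of $\cD\cap\ri(D_x)$ in $\ri(D_x)$).
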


\begin{proof}
  The function $h\mapsto \phi(h):=\inf_{P\in\cP} E_P[U(x+h\Delta S)]$ is a concave on $D_x$. Thus, $\phi$ is continuous on the relative interior of its domain $\dom(\phi)$. Moreover, $\sup_{\dom(\phi)} \phi=\sup_{\ri(\dom(\phi))}\phi$ by concavity, so that it suffices to show that $\cD$ is dense in $\dom(\phi)$.
  Indeed, suppose that $x>0$ and let $h\in\ri(D_x)$. As $0\in D_x$, there exists $\lambda>1$ such that $\lambda h\in D_x$; that is, $x+h\DS \geq (1-\lambda^{-1})x$ $\cP$-q.s. In view of Assumption~\ref{as:U1}(i), this implies that $U(x+h\Delta S)$ is uniformly bounded from below and thus that $\phi(h)>-\infty$. On the other hand, if $x=0$, then $h\in D_x$ implies $h\DS=0$ $\cP$-q.s.\ by $\NA(\cP)$. Thus, we have $\ri(D_x)\subseteq \dom(\phi)$ in both cases, and hence $\cD$ is dense in $\dom(\phi)$.
\end{proof}

\begin{lemma}\label{le:valueFctContinuous}
  Let~\eqref{eq:int1} hold for some $x>0$. Then~\eqref{eq:int1} holds for all $x\geq0$ and $u: [0,\infty)\to [-\infty,\infty)$ is nondecreasing, concave and continuous.
\end{lemma}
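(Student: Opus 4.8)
The plan is to split the claim into two parts: first that \eqref{eq:int1} propagates from one fixed $x_0>0$ to all $x\geq0$, and then that the value function $u$ inherits monotonicity, concavity and continuity. For the first part, fix $x_0>0$ for which \eqref{eq:int1} holds and let $x\geq0$ be arbitrary together with $h\in D_x$. The idea is to compare the wealth $x+h\Delta S$ with a suitable wealth process that is admissible at level $x_0$. If $x\leq x_0$, then $D_x\subseteq D_{x_0}$ and $x+h\Delta S\leq x_0+h\Delta S$ with $h\in D_{x_0}$, so monotonicity of $U$ and \eqref{eq:int1} at $x_0$ give the bound directly. If $x>x_0$, I would instead scale: since $h\in D_x$ means $x+h\Delta S\geq0$ $\cP$-q.s., the rescaled portfolio $\tilde h:=(x_0/x)h$ satisfies $x_0+\tilde h\Delta S=(x_0/x)(x+h\Delta S)\geq0$ $\cP$-q.s., so $\tilde h\in D_{x_0}$, and then $\eps U^+(x+h\Delta S)\leq 2U^+(\eps(x+h\Delta S))+2U(2)$ from \eqref{eq:scalingIneq} with $\eps=x_0/x\in(0,1)$ (after normalizing $U(1)\geq0$ as in the proof of Lemma~\ref{le:upperBound}) reduces $P$-integrability of $U^+(x+h\Delta S)$ to that of $U^+(x_0+\tilde h\Delta S)$, which holds by \eqref{eq:int1} at $x_0$. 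In fact this is exactly the mechanism already used inside the proof of Lemma~\ref{le:upperBound}, so I would simply invoke that argument.

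For the second part, once \eqref{eq:int1} is known for all $x\geq0$, Theorem~\ref{th:existence1underInt} applies at every $x\geq0$, so $u(x)\in[-\infty,\infty)$ is well defined and finite (indeed $u(x)<\infty$). Monotonicity is immediate: for $x\leq x'$ one has $D_x\subseteq D_{x'}$, and for $h\in D_x$ the wealth satisfies $x+h\Delta S\leq x'+h\Delta S$, so $E_P[U(x+h\Delta S)]\leq E_P[U(x'+h\Delta S)]$ for every $P$; taking the infimum over $\cP$ and then the supremum over $D_x\subseteq D_{x'}$ gives $u(x)\leq u(x')$.

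Concavity is the crux. Given $x_0,x_1\geq0$, $\lambda\in[0,1]$ and portfolios $h_0\in D_{x_0}$, $h_1\in D_{x_1}$, set $x_\lambda:=\lambda x_0+(1-\lambda)x_1$ and $h_\lambda:=\lambda h_0+(1-\lambda)h_1$. Then $x_\lambda+h_\lambda\Delta S=\lambda(x_0+h_0\Delta S)+(1-\lambda)(x_1+h_1\Delta S)\geq0$ $\cP$-q.s., so $h_\lambda\in D_{x_\lambda}$; concavity of $x\mapsto U(\omega,x)$ gives $U(x_\lambda+h_\lambda\Delta S)\geq\lambda U(x_0+h_0\Delta S)+(1-\lambda)U(x_1+h_1\Delta S)$ pointwise, hence (since all the $U^+$ parts are $P$-integrable and $U$ is bounded below on wealths bounded away from $0$, or more simply since expectations of $U$ are well defined in $[-\infty,\infty)$) $E_P[U(x_\lambda+h_\lambda\Delta S)]\geq\lambda E_P[U(x_0+h_0\Delta S)]+(1-\lambda)E_P[U(x_1+h_1\Delta S)]\geq\lambda\inf_{P'}E_{P'}[U(x_0+h_0\Delta S)]+(1-\lambda)\inf_{P'}E_{P'}[U(x_1+h_1\Delta S)]$ for every $P\in\cP$; taking $\inf$ over $P$ on the left and then $\sup$ over $h_0,h_1$ yields $u(x_\lambda)\geq\lambda u(x_0)+(1-\lambda)u(x_1)$. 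Finally, a real-valued concave function that is finite on $[0,\infty)$ is automatically continuous on $(0,\infty)$, and continuity at $0$ from the right holds because $u$ is monotone and concave, so $u(0)\leq\lim_{x\downarrow0}u(x)\leq u(x_0)$ for any $x_0$ while $\lim_{x\downarrow0}u(x)=\sup_{x>0}$ of a concave function bounded by $u(x)<\infty$ forces $\lim_{x\downarrow0}u(x)=\sup_{\lambda\in(0,1)}u(\lambda x_0)\le u(0)$ by concavity through the point $(0,u(0))$ — that is, $u(\lambda x_0)\geq\lambda u(x_0)+(1-\lambda)u(0)$ gives $\liminf_{\lambda\downarrow0}u(\lambda x_0)\geq u(0)$, matching the upper bound from monotonicity. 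I expect the propagation of the integrability condition to $x>x_0$ (the scaling step) to be the only genuinely delicate point, and it is essentially a citation to the estimate already carried out in Lemma~\ref{le:upperBound}.
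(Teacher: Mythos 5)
There is a genuine gap at the final step: continuity of $u$ at $x=0$. Your propagation of \eqref{eq:int1} via the scaling inequality, the monotonicity, the concavity argument, and continuity on the open interval $(0,\infty)$ all match the paper and are fine. But your claimed proof of right-continuity at $0$ does not close. Monotonicity gives $u(0)\leq\lim_{x\downarrow0}u(x)$, and your concavity inequality $u(\lambda x_0)\geq\lambda u(x_0)+(1-\lambda)u(0)$ gives $\liminf_{\lambda\downarrow0}u(\lambda x_0)\geq u(0)$ --- but these are the \emph{same} inequality, both bounding the limit from below by $u(0)$. You never establish the reverse bound $\lim_{x\downarrow0}u(x)\leq u(0)$, which is the only nontrivial direction. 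And no abstract argument can supply it: a nondecreasing concave function on $[0,\infty)$ need not be continuous at the left endpoint of its domain. For instance, $f(0)=0$ and $f(x)=1+x$ for $x>0$ is nondecreasing and concave on $[0,\infty)$ yet $\lim_{x\downarrow0}f(x)=1>f(0)$. Concavity only rules out an upward jump at $0$ (i.e.\ $u(0)>\lim$), which monotonicity already excludes; it does not rule out $u(0)<\lim_{x\downarrow 0}u(x)$.

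This is precisely where the paper has to do real work. It takes optimal portfolios $\hat{h}_n\in K_{1/n}$ from Theorem~\ref{th:existence1underInt}, uses $K_{1/n}\subseteq K_1$ and the compactness of $K_1$ (Lemma~\ref{le:admCompact}) to extract a limit $h_\infty\in\bigcap_n K_{1/n}=K_0$, and then applies Fatou's lemma with the integrable majorant $Y$ of Lemma~\ref{le:upperBound} (taken at $x=1$) to get
\[
  \lim_{n\to\infty}u(1/n)=\lim_{n\to\infty}\inf_{P\in\cP}E_P[U(1/n+\hat{h}_n\Delta S)]
  \leq \inf_{P\in\cP}E_P[U(h_\infty\Delta S)]\leq u(0).
\]
So the missing idea is that upper semicontinuity of $u$ at $0$ comes from the upper semicontinuity of the objective (via the domination in Lemma~\ref{le:upperBound}) together with the compactness furnished by $\NA(\cP)$ --- not from convexity of $u$ alone. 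You should replace your endpoint argument with this compactness-plus-Fatou step.
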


\begin{proof}
 The first claim follows from~\eqref{eq:scalingIneq} and implies that $u(x)<\infty$ for $x\in[0,\infty)$. Moreover, it is elementary to see that $u$ is nondecreasing and concave. In particular, $u$ is continuous on $(0,\infty)$, the interior of its domain. It remains to show that $u(0)\geq \lim_{n\to\infty} u(1/n)$. For each $n\geq1$, let $\hat{h}_n\in K_{1/n}$ be an optimal portfolio for $x=1/n$ as in Theorem~\ref{th:existence1underInt}. Since $K_{1/n}\subseteq K_1$ and $K_1$ is compact, we have $\hat{h}_n\to h_\infty$ after passing to a subsequence. Moreover, we have $h_\infty \in \cap_{n\geq1} K_{1/n}=K_0$. (In fact, $h_\infty=0$ and the whole sequence converges.)
  Using Lemma~\ref{le:upperBound} (with $x=1$) and Fatou's lemma similarly as in the proof of Theorem~\ref{th:existence1underInt}, we obtain that
  \[
    \lim_{n\to\infty} u(1/n)
     = \lim_{n\to\infty} \inf_{P\in\cP} E_P[U(1/n+\hat{h}_n\DS)]
     \leq \inf_{P\in\cP} E_P[U(0+h_\infty\DS)] \leq u(0)
  \]
  as desired.
\end{proof}

It remains to show the result for the scalar case, Remark~\ref{rk:1dimExistence}.

\begin{proof}[Proof of Remark~\ref{rk:1dimExistence}]
  Note that all expectations are well defined because $U$ is bounded from below on $[0,\infty)$.
  As in the proof of Theorem~\ref{th:existence1underInt}, we may assume that $x>0$; moreover, there exists a maximizing sequence $h_n\in K_x$ converging to some $\hat{h}\in K_x$.
  By passing to a subsequence, we may assume that one of the following holds:
   \begin{enumerate}
     \item $h_n>0$ for all $n\geq1$ and $\{h_n\}$ is a monotone sequence,
     \item $h_n<0$ for all $n\geq1$ and $\{h_n\}$ is a monotone sequence,
     \item $h_n=0$ for all $n\geq1$.
   \end{enumerate}
   Case (iii) is trivial while (i) and (ii) are symmetric; we focus on (i).
  As $u(x)<\infty$ and $h_n>0$, the set
  \[
    \cP_*:=\{P\in\cP:\, E_P[U^+(x+h\Delta S)]<\infty \mbox{ for some }h>0\}
  \]
  is not empty. Using again~\eqref{eq:scalingIneq}, we see that in fact
  \begin{equation}\label{eq:cPstarAllh}
    \cP_*= \{P\in\cP:\, E_P[U^+(x+h\Delta S)]<\infty \mbox{ for all }h\geq 0\}.
  \end{equation}
  We have $h_n\in[\hat{h},h_1]$ if $\{h_n\}$ is decreasing, or otherwise $h_n\in[h_1,\hat{h}]$. Hence,
  \[
    U(x+h_n\Delta S) \leq U^+(x+h_1\Delta S) \vee U^+(x+\hat{h}\Delta S),\quad n\geq1.
  \]
  For $P\in\cP_*$, the right-hand side is integrable by~\eqref{eq:cPstarAllh} and so Fatou's lemma yields that $\limsup_{n\to\infty} E_P[U(x+h_n\Delta S)] \leq E_P[U(x+\hat{h}\Delta S)]$. As a result,
  \begin{align}
  \inf_{P\in\cP_*} E_P[U(x+\hat{h}\Delta S)]
  & \geq \limsup_{n\to\infty} \inf_{P\in\cP_*} E_P[U(x+h_n\Delta S)] \nonumber \\
  &\geq \limsup_{n\to\infty} \inf_{P\in\cP} E_P[U(x+h_n\Delta S)] \nonumber \\
  &= u(x) \label{eq:proof1dimLimit}.
  \end{align}
  Recalling (i), we clearly have $\hat{h}\geq0$. If $\hat{h}>0$, then $E_P[U(x+\hat{h}\Delta S)]=\infty$ for $P\in\cP\setminus\cP_*$ and thus
  \begin{equation}\label{eq:cPstarEqual}
  \inf_{P\in\cP} E_P[U(x+\hat{h}\Delta S)] =\inf_{P\in\cP_*} E_P[U(x+\hat{h}\Delta S)].
  \end{equation}
  If $\hat{h}=0$, then~\eqref{eq:cPstarEqual} is still true since both sides are equal to $U(x)$; recall that $U$ is deterministic.
  In view of~\eqref{eq:proof1dimLimit}, this completes the proof.
\end{proof}

\section{The Multi-Period Case}\label{se:multiPeriod}

Let us now detail the setting for the multi-period market; we follow~\cite{BouchardNutz.13}. Fix a time horizon $T\in\N$, let $\Omega_1$ be a Polish space and let $\Omega_t:=\Omega_1^t$ be the $t$-fold Cartesian
product, $t=0,1,\dots,T$, with the convention that $\Omega_0$ is a singleton. We define $\cF_t$ to be the universal completion of the Borel-$\sigma$-field $\cB(\Omega_t)$; that is,
$\cF_t = \cap_{P} \cB(\Omega_t)^P$, where $\cB(\Omega_t)^P$ is the $P$-completion of $\cB(\Omega_t)$ and $P$ ranges over the set $\fP(\Omega_t)$ of all probability measures on $\cB(\Omega_t)$.
Moreover, we set $(\Omega,\cF):=(\Omega_T,\cF_T)$; this will be our basic measurable space. For convenience of notation, we shall often see $(\Omega_t,\cF_t)$ as a subspace of $(\Omega,\cF)$.

For each $t\in\{0,1,\dots,T-1\}$ and $\omega\in\Omega_t$, we are given a nonempty convex set
$\cP_t(\omega)\subseteq \fP(\Omega_1)$; intuitively,
$\cP_t(\omega)$ is the set of possible models for the $t$-th period, given state $\omega$ at time $t$. We assume that for each $t$,
\[%
  \graph(\cP_t):=\{(\omega,P):\, \omega\in \Omega_t,\, P\in\cP_t(\omega)\}\subseteq \Omega_t \times \fP(\Omega_t)\quad \mbox{is analytic,}
\]%
where we use the usual weak topology on $\fP(\Omega_1)$.
We recall that a subset of a Polish space is called analytic if it is the image of a Borel subset of another Polish space under a Borel-measurable mapping (see \cite[Chapter~7]{BertsekasShreve.78}); in particular, the above condition is satisfied whenever $\graph(\cP_t)$ is a Borel set. Analyticity of $\graph(\cP_t)$ implies that $\cP_t$ admits a universally measurable selector; that is, a universally measurable kernel $P_t:\, \Omega_t\to \fP(\Omega_1)$ such that $P_t(\omega)\in \cP_t(\omega)$ for all $\omega\in\Omega_t$. Given such a kernel $P_t$ for each $t\in\{0,1,\dots T-1\}$, we can define a probability $P$ on $\Omega$ by Fubini's theorem,
\[
  P(A)=\int_{\Omega_1} \cdots \int_{\Omega_1} \1_A(\omega_1,\dots,\omega_T) P_{T-1}(\omega_1,\dots,\omega_{T-1}; d\omega_T)\cdots P_0(d\omega_1),\quad A\in \Omega,
\]
where we write $\omega=(\omega_1,\dots,\omega_T)$ for a generic element of $\Omega$. The above formula will be
abbreviated as $P=P_0\otimes \cdots \otimes P_{T-1}$ in what follows. We can then introduce the set $\cP\subseteq\fP(\Omega)$ of possible models for the multi-period market up to time $T$,
\[%
  \cP:=\{P_0\otimes \cdots \otimes P_{T-1}:\, P_t(\cdot)\in \cP_t(\cdot),\,t=0,1,\dots T-1\},
\]%
where, more precisely, each $P_t$ is a universally measurable selector of $\cP_t$. See also~\cite{BouchardNutz.13, DolinskyNutzSoner.11} for more background and examples.

Next, we introduce the stocks and trading portfolios. Let $d\in \N$ and let $S_t=(S_t^1,\dots,S_t^d): \Omega_t \to \R^d$ be Borel-measurable for all $t\in \{0,1,\dots,T\}$.
We assume that $S^i_t\geq0$ $\cP$-q.s. Moreover, let $\cH$ be the set of all predictable $\R^d$-valued processes. Given $H\in\cH$, we denote
\[%
  H\sint S=(H\sint S_t)_{t\in\{0,1,\dots,T\}},\quad H\sint S_t=\sum_{u=1}^t H_u \Delta S_u,
\]%
where $\Delta S_u=S_u-S_{u-1}$. Sometimes it will be convenient to write $\DS_{t+1}$ explicitly as a function on $\Omega_t\times \Omega_1$,
\[
  \Delta S_{t+1}(\omega,\omega') = \Delta S_{t+1}(\omega_1,\dots,\omega_t,\omega'),\quad (\omega,\omega')=((\omega_1,\dots,\omega_t),\omega')\in \Omega_t\times \Omega_1.
\]
For fixed initial capital $x\geq0$, the set of admissible portfolios for our utility maximization problem is given by
\[
  \cH_x:=\{H\in\cH:\, x+ H\sint S_t\geq0\; \cP\qs\mbox{ for }t=1,\dots,T\}.
\]
We continue to work under the no-arbitrage condition $\NA(\cP)$ from~\cite{BouchardNutz.13}; in the present setting, it postulates that for all $H\in\cH$,
\[
  H\sint S_T \geq 0 \quad\cP\qs \q\q\mbox{implies}\q\q H\sint S_T = 0 \quad\cP\qs
\]
Recall that a function $f$ from a Borel subset of a Polish space into $\overline{\R}:=[-\infty,\infty]$ is called lower semianalytic if $\{f<c\}$ is analytic for all $c\in\R$; in particular, any Borel function is lower semianalytic.

We have seen in the previous section that nonexistence of an optimal strategy may arise when the utility function is unbounded from above. While we have used the integrability condition~\eqref{eq:int1} in the one-period case (which already is not sharp), it seems difficult to find a condition in the multi-period case that is actually verifiable (or at least sharp). In fact, even the finiteness of the value function in the case without uncertainty is typically difficult to verify. In order to establish a clean statement, we therefore focus on the bounded case for our main result, and give an example for the unbounded case below.

\begin{theorem}\label{th:multiperiodBdd}
  Let $\NA(\cP)$ hold, let $x\geq0$ and let $U : \Omega\times [0,\infty)\to\R$ be a lower semianalytic function which is bounded from above and satisfies Assumption~\ref{as:U1}.
  Then there exists $\hat{H}\in\cH_x$ such that
  \[
    \inf_{P\in\cP} E_P[U(x+\hat{H}\sint S_T)] =  \sup_{H\in \cH_x} \inf_{P\in\cP} E_P[U(x+H \sint S_T)]<\infty.
  \]
\end{theorem}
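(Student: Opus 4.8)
## Proof Strategy for Theorem~\ref{th:multiperiodBdd}

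The plan is to use dynamic programming to reduce the multi-period problem to a backward sequence of one-period problems, each of which falls under the scope of Theorem~\ref{th:existence1underInt}. The key is to define, for each $t\in\{0,1,\dots,T\}$ and each $\omega\in\Omega_t$, a conditional value function $U_t(\omega,\cdot)$ on $[0,\infty)$, starting from $U_T:=U$ and proceeding backward via
\[
  U_t(\omega,y) := \sup_{h\in\R^d}\;\inf_{P\in\cP_t(\omega)}\; E_P\bigl[U_{t+1}\bigl((\omega,\cdot),\,y+h\,\Delta S_{t+1}(\omega,\cdot)\bigr)\bigr],
\]
where of course the supremum is effectively over $h$ in the conditional admissibility set $D_{t}^x(\omega):=\{h:\,y+h\,\Delta S_{t+1}(\omega,\cdot)\ge0\ \cP_t(\omega)\text{-q.s.}\}$. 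Since $U$ is bounded from above, every $U_t$ inherits an upper bound by the same constant, so the integrability condition~\eqref{eq:int1} holds trivially at each step and Theorem~\ref{th:existence1underInt} applies to the conditional one-period market $(\Omega_1,\cF_1,\cP_t(\omega))$ with deterministic initial data $y$ and random utility $U_{t+1}((\omega,\cdot),\cdot)$. This gives, for each $(\omega,y)$, existence of a conditional optimizer $\hat{h}_t(\omega,y)$, and by Lemma~\ref{le:valueFctContinuous} each $U_t(\omega,\cdot)$ is again nondecreasing, concave and continuous (as well as bounded above), so Assumption~\ref{as:U1} propagates.

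The first substantive block of work is the measurability bookkeeping: I must show that each $U_t$ is lower semianalytic jointly in $(\omega,y)$, that $\NA(\cP)$ localizes correctly (i.e.\ $\NA(\cP_t(\omega))$ holds for $\cP$-q.e.\ $\omega$, which is part of the analysis in~\cite{BouchardNutz.13}), and that the conditional optimizers can be selected universally measurably. For the backward recursion, the inner infimum $\inf_{P\in\cP_t(\omega)}E_P[\cdots]$ is handled by the fact that $(\omega,P,y,h)\mapsto E_P[U_{t+1}((\omega,\cdot),y+h\,\Delta S_{t+1})]$ is lower semianalytic — using that integration of a lower semianalytic function against a Borel kernel preserves lower semianalyticity (Bertsekas–Shreve) together with analyticity of $\graph(\cP_t)$ — and then the supremum over $h$ is again lower semianalytic by the projection theorem for analytic sets. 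The measurable selection of $\hat{h}_t$ from the (nonempty, by Theorem~\ref{th:existence1underInt}) argmax correspondence uses the Jankov–von Neumann selection theorem; one should restrict $h$ to the set $K$ as in Lemma~\ref{le:admCompact} to get a compact-valued correspondence and a clean closed-graph argument. A minor point: the passage from the pointwise conditional problems to the global one uses the product structure $\cP=\cP_0\otimes\cdots\otimes\cP_{T-1}$ together with Fubini, so that $\inf_{P\in\cP}E_P[U(x+H\sint S_T)]$ telescopes into the nested conditional infima, and $U_0(y)$ (with $\Omega_0$ a singleton) equals the value $u(x)$ at $y=x$.

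Having selected kernels $\hat{h}_t:\Omega_t\times[0,\infty)\to\R^d$, I define the candidate optimal strategy pathwise by forward iteration: set $\hat{X}_0:=x$, $\hat{H}_{t+1}(\omega):=\hat{h}_t(\omega,\hat{X}_t(\omega))$, and $\hat{X}_{t+1}:=\hat{X}_t+\hat{H}_{t+1}\,\Delta S_{t+1}$. Predictability and universal measurability of $\hat{H}$ follow from those of the $\hat{h}_t$ and an induction on $t$; admissibility $\hat{H}\in\cH_x$ holds because $\hat{X}_t\ge0$ $\cP$-q.s.\ by construction. The final step is to verify optimality: by backward induction on $t$ one shows
\[
  \inf_{P\in\cP}E_P\bigl[U(x+H\sint S_T)\mid\cF_t\bigr]\le U_t(\cdot,x+H\sint S_t)\quad\cP\text{-q.s.}
\]
for every $H\in\cH_x$, with equality along $\hat{H}$; taking $t=0$ gives $\inf_P E_P[U(x+\hat H\sint S_T)]=U_0(x)=u(x)\ge\inf_P E_P[U(x+H\sint S_T)]$ for all $H$. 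The boundedness of $U$ from above is what makes all the expectations and conditional expectations well defined (no integrability side conditions needed) and lets dominated/Fatou-type arguments run smoothly in the induction.

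The main obstacle I anticipate is not the dynamic-programming idea itself but the measurable-selection plumbing — specifically, ensuring at each backward step that $U_{t+1}$ is lower semianalytic as a function on a Borel subset of a Polish space, that restricting to a compact "$K$" set preserves this while keeping the argmax correspondence nonempty and suitably measurable, and that the universally measurable selectors compose correctly through the product construction so that the resulting $\hat H$ is genuinely predictable with respect to $(\cF_t)$. This is exactly the place where the machinery of~\cite{BouchardNutz.13, BertsekasShreve.78, Nutz.10Gexp} must be invoked carefully; once that is in place, the optimality verification is a routine backward induction using Theorem~\ref{th:existence1underInt} fiberwise.
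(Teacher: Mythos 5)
Your overall architecture matches the paper's: backward dynamic programming to conditional one-period problems, fiberwise application of Theorem~\ref{th:existence1underInt}, measurable selection of the conditional optimizers, forward pasting, and a two-sided backward induction for optimality. But there is one genuine gap in the measurability argument, and it sits exactly at the point you flag as ``plumbing.'' You define
\[
  U_t(\omega,y) := \sup_{h\in\R^d}\;\inf_{P\in\cP_t(\omega)} E_P\bigl[U_{t+1}\bigl((\omega,\cdot),\,y+h\,\Delta S_{t+1}(\omega,\cdot)\bigr)\bigr]
\]
and assert that the supremum over $h$ preserves lower semianalyticity ``by the projection theorem for analytic sets.'' That is backwards: the projection theorem shows that projections of \emph{analytic} sets are analytic, which gives stability of \emph{upper} semianalyticity under arbitrary suprema. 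For a lower semianalytic integrand the sets $\{f<c\}$ are analytic, so $\{\sup_h f>c\}$ is a projection of a \emph{coanalytic} set and need not be analytic; an uncountable supremum of lower semianalytic functions is not lower semianalytic in general. Since the entire backward recursion runs on lower semianalyticity --- integration against Borel kernels (Bertsekas--Shreve, Prop.~7.48), the infimum over the analytic graph of $\cP_t$ (Prop.~7.47), and the Jankov--von Neumann selection of $\eps$-optimal kernels $P_t^\eps$ all require it --- this step does not go through as written. The paper's fix is to \emph{define} $U_t$ as the supremum over the countable set $\Q^d$ (so lower semianalyticity is preserved by countable-supremum stability) and then to prove separately, via a concavity/density argument (Lemmas~\ref{le:countableSup} and~\ref{le:supOverQsame}), that this countable supremum coincides $\cP$-quasi-surely with the supremum over the full admissibility set $D_{t,x}(\omega)$. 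That density lemma is a genuine piece of content (it uses that $\phi$ is concave and that the relative interior of its domain is where the supremum is attained up to $\eps$), and your proposal is missing it.

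Two smaller points. First, for selecting the argmax $\hat h_t$ you propose Jankov--von Neumann, but the argmax correspondence $\{h:\phi(\omega,\cdot,h)=U_t(\omega,\cdot)\}$ does not obviously have an analytic graph (differences of lower semianalytic functions are not well behaved); the paper instead shows $\phi$ is $\cF_t\otimes\cB(\R)\otimes\cB(\R^d)$-measurable by combining universal measurability in $\omega$ with upper semicontinuity in $(x,h)$ (the latter from Fatou, using boundedness from above), and then invokes a selection theorem for measurable graphs (Leese). Second, $U_t(\omega,0)$ must be defined as $\lim_{x\downarrow0}U_t(\omega,x)$ rather than by the formula, both to keep lower semianalyticity at $x=0$ and to make the continuity claims of Assumption~\ref{as:U1} propagate. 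Neither of these is fatal to your plan, but together with the $\Q^d$ device they are the substance of the proof rather than routine bookkeeping.
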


The proof is stated below. Our last result is a simple example where the utility maximization problem admits a solution for any deterministic utility function, possibly unbounded from above, under a strong boundedness and nondegeneracy assumption on the stock price process. Let us first explain what we mean by nondegeneracy. Given $t\in \{0,\dots,T-1\}$, $\omega\in\Omega_t$ and $x\geq0$, let $D_{t,x}(\omega):=\{h\in\R^d:\, x+h\DS_{t+1}(\omega,\cdot) \geq 0\; \cP_t(\omega)\qs\}$ and let $K_{t,x}(\omega)$ be the corresponding projection as in Lemma~\ref{le:admCompact}. If $\NA(\cP)$ holds, then by Lemma~\ref{le:admCompact} and Lemma~\ref{le:NAlocal} below there exists a function $\eps: \Omega_t\to \R$ which is strictly positive $\cP$-q.s.\ and has the following property: for all $h\in K_{t,x}(\omega)$ with $|h|=1$, there exists $P\in\cP_t(\omega)$ such that $P\{h\Delta S_{t+1}(\omega,\cdot)<-\eps(\omega)\}>0$. We shall say that $S$ is \emph{uniformly nondegenerate} if $\eps$ can be chosen to be a positive constant. (In the spirit of~\cite{Schal.00}, one could also call this a uniform no-arbitrage condition.) We then have the following result, again proved in the next subsection.

\begin{example}\label{ex:bddS}
   Let $x\geq0$ and let $U : \Omega\times [0,\infty)\to\R$ be a lower semianalytic function such that $U(\cdot,1)$ is bounded from above and Assumption~\ref{as:U1} holds. Suppose that $S$ is bounded and uniformly nondegenerate. Then there exists $\hat{H}\in\cH_x$ such that
  \[
    \inf_{P\in\cP} E_P[U(x+\hat{H}\Delta S)] =  \sup_{H\in \cH_x} \inf_{P\in\cP} E_P[U(x+H \sint S_T)]<\infty.
  \]
\end{example}

\subsection{Proofs}

For fixed $\omega\in\Omega_t$, the random variable $\Delta S_{t+1}(\omega,\cdot)$ on $\Omega_1$ determines a one-period market on $(\Omega_1,\cF_1)$ under the set $\cP_t(\omega)\subseteq \fP(\Omega_1)$. We denote the no-arbitrage condition~\eqref{eq:NA1} of that market by $\NA(\cP_t(\omega))$. The following lemma, proved in \cite[Theorem~4.5]{BouchardNutz.13}, will be useful in order to apply the results of Section~\ref{se:onePeriod}.

\begin{lemma}\label{le:NAlocal}
  The following are equivalent:
  \begin{enumerate}
    \item $\NA(\cP)$ holds.
    \item The set $\{\omega\in\Omega_t:\, \NA(\cP_t(\omega)) \mbox{ fails}\}$ is $\cP$-polar for all $t\in\{0,\dots, T-1\}$.
  \end{enumerate}
\end{lemma}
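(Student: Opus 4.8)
This equivalence is \cite[Theorem~4.5]{BouchardNutz.13}; I sketch the argument I would follow, which adapts the classical local characterization of no-arbitrage to the quasi-sure, nondominated setting. Two tools recur throughout: a Fubini decomposition of any $P=P_0\otimes\cdots\otimes P_{T-1}\in\cP$ along the product structure of $\Omega$, and measurable selection via the analyticity of $\graph(\cP_t)$ and the calculus of lower semianalytic functions, which lets one pass between pointwise statements about the sets $\cP_t(\omega)$ and global statements about $\cP$.

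For (ii)$\Rightarrow$(i), assume (ii) and let $H\in\cH$ satisfy $H\sint S_T\geq0$ $\cP$-q.s.; write $V_t:=H\sint S_t$. The first step is a backward induction on $t=T,T-1,\dots,1$ showing $V_{t-1}\geq0$ $\cP$-q.s.: if $P\{V_{t-1}<0\}>0$ for some $P\in\cP$, then $\tilde h:=H_t\1_{\{V_{t-1}<0\}}$ is $\cF_{t-1}$-measurable and $\tilde h\,\Delta S_t=(V_t-V_{t-1})\1_{\{V_{t-1}<0\}}$ is $\geq0$ $\cP$-q.s.\ (using $V_t\geq0$, which is the induction hypothesis, $V_T\geq0$ being the given hypothesis on $H$) and $>0$ on $\{V_{t-1}<0\}$; decomposing $P$ along the product structure and selecting measurably then produces a non-polar set of $\omega\in\Omega_{t-1}$ on which $\tilde h(\omega)$ is a one-period arbitrage for $\cP_{t-1}(\omega)$, so $\{\omega:\NA(\cP_{t-1}(\omega))\text{ fails}\}$ is non-polar, contradicting (ii). A second, forward induction on $t=1,\dots,T$ then gives $V_t=0$ $\cP$-q.s.: indeed $V_0=0$, and if $V_{t-1}=0$ $\cP$-q.s.\ then $V_t=H_t\,\Delta S_t\geq0$ $\cP$-q.s.\ with $H_t$ being $\cF_{t-1}$-measurable, so (ii) and a Fubini argument force $H_t\,\Delta S_t=0$, i.e.\ $V_t=0$, $\cP$-q.s. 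For $t=T$ this reads $H\sint S_T=0$ $\cP$-q.s., which is $\NA(\cP)$.

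For (i)$\Rightarrow$(ii) I argue by contraposition. Suppose $A_t:=\{\omega\in\Omega_t:\NA(\cP_t(\omega))\text{ fails}\}$ is not $\cP$-polar for some $t$. One first checks that $A_t$ is analytic, hence universally measurable (this is where analyticity of $\graph(\cP_t)$ enters, together with the fact that ``$\NA(\cP_t(\omega))$ fails'' can be expressed through lower semianalytic quantifiers over $h\in\R^d$ and $P\in\cP_t(\omega)$). Since $A_t$ is not polar, there is a product measure $Q=Q_0\otimes\cdots\otimes Q_{t-1}$, built from selectors of $\cP_0,\dots,\cP_{t-1}$, with $Q(A_t)>0$. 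By measurable selection, choose an $\cF_t$-measurable $\omega\mapsto h(\omega)\in\R^d$ and a selector $\omega\mapsto P_t(\omega)\in\cP_t(\omega)$ such that for every $\omega\in A_t$ one has $h(\omega)\,\Delta S_{t+1}(\omega,\cdot)\geq0$ $\cP_t(\omega)$-q.s.\ and $P_t(\omega)\{h(\omega)\,\Delta S_{t+1}(\omega,\cdot)>0\}>0$, with $h\equiv0$ off $A_t$. Completing with arbitrary selectors $P_{t+1},\dots,P_{T-1}$ and setting $P:=Q_0\otimes\cdots\otimes Q_{t-1}\otimes P_t\otimes\cdots\otimes P_{T-1}\in\cP$, the predictable process $H$ with $H_{t+1}=h$ and $H_u=0$ for $u\neq t+1$ satisfies $H\sint S_T=h\,\Delta S_{t+1}\geq0$ $\cP$-q.s.\ (Fubini, since $h(\omega)\,\Delta S_{t+1}(\omega,\cdot)\geq0$ $\cP_t(\omega)$-q.s.\ for every $\omega$) while $P\{H\sint S_T>0\}\geq\int_{A_t}P_t(\omega)\{h(\omega)\,\Delta S_{t+1}(\omega,\cdot)>0\}\,Q(d\omega)>0$; thus $\NA(\cP)$ fails.

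I expect the main obstacle to lie not in this inductive skeleton but in the measurable-selection and Fubini arguments that underpin it in the nondominated framework: verifying that $A_t$ is analytic, that the bad portfolios $h(\omega)$ and worst-case kernels $P_t(\omega)$ can be chosen universally measurably, that ``$\cP$-polar'' coincides with ``$Q$-null for every product measure $Q$ built from selectors of $\cP_0,\dots,\cP_{t-1}$'', and that a property valid $\cP_t(\omega)$-q.s.\ for all $\omega$ outside a $\cP$-polar set lifts to a $\cP$-q.s.\ property. All of these rest on the analyticity of $\graph(\cP_t)$ and the stability of lower semianalytic functions under projection, partial infima and suprema, and composition \cite[Chapter~7]{BertsekasShreve.78}, and are carried out in \cite{BouchardNutz.13}.
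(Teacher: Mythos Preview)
Your proposal is correct and matches the paper's approach: the paper does not prove this lemma itself but simply cites \cite[Theorem~4.5]{BouchardNutz.13}, exactly as you do. The sketch you add is a faithful outline of that proof---the backward/forward induction for (ii)$\Rightarrow$(i) and the measurable-selection construction of a global arbitrage for (i)$\Rightarrow$(ii)---and your closing paragraph correctly identifies where the real work lies (analyticity of $A_t$, measurable selection of $h(\cdot)$ and $P_t(\cdot)$, and the Fubini-type passage between $\cP$-q.s.\ and $\cP_t(\omega)$-q.s.\ statements).
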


In view of the discussion on possible no-arbitrage conditions mentioned in the Introduction, let us remark that the above ``locality property'' of $\NA(\cP)$ is crucial in order to apply dynamic programming as in the subsequent arguments.

We also need a local description of the admissible portfolios.

\begin{lemma}\label{le:Hxpointwise}
  Let $x\geq0$ and $H\in\cH$. Then $H\in\cH_x$ if and only if
  \[
    x+H\sint S_{t+1}(\omega,\cdot)\geq0\;\cP_t(\omega)\qs\quad \mbox{for $\cP$-quasi-every }\omega\in\Omega_t
  \]
  and every $t=0,1,\dots,T-1$.
\end{lemma}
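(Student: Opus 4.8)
The plan is to prove the two implications separately, reducing in both cases to a single period via the product structure of $\cP$. The key observation is that $x+H\sint S_{t+1}$ depends only on the coordinates $\omega_1,\dots,\omega_{t+1}$, so that ``$x+H\sint S_{t+1}\geq0$ $\cP\qs$'' is equivalent to the same statement for the set of partial products $\{P_0\otimes\cdots\otimes P_t\}$ on $\Omega_{t+1}$, and similarly a universally measurable set $M\subseteq\Omega_t$ is $\cP$-polar iff it is $(P_0\otimes\cdots\otimes P_{t-1})$-null for all selectors $P_0,\dots,P_{t-1}$ (here one uses that $\cF$ is the universal completion, so that a universally measurable set is $\cP$-polar precisely when it is null under every $P\in\cP$). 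Writing $N_t(\omega):=\{\omega'\in\Omega_1:\, x+H\sint S_{t+1}(\omega,\omega')<0\}$, the workhorse throughout is Fubini's theorem in the form $(P_0\otimes\cdots\otimes P_t)(\{x+H\sint S_{t+1}<0\})=\int_{\Omega_t} P_t(\omega)(N_t(\omega))\,(P_0\otimes\cdots\otimes P_{t-1})(d\omega)$.

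\textbf{The ``if'' direction.}
Fix $t$ and assume there is a $\cP$-polar set $M_t\subseteq\Omega_t$ such that $N_t(\omega)$ is $\cP_t(\omega)$-polar for $\omega\notin M_t$. Given any $P=P_0\otimes\cdots\otimes P_{T-1}\in\cP$, each kernel $P_t(\omega)$ lies in $\cP_t(\omega)$, hence $P_t(\omega)(N_t(\omega))=0$ for $\omega\notin M_t$; since $M_t$ is $(P_0\otimes\cdots\otimes P_{t-1})$-null, Fubini gives $P(x+H\sint S_{t+1}<0)=0$. Running this over $t=0,\dots,T-1$ shows $H\in\cH_x$. This direction requires no selection argument.

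\textbf{The ``only if'' direction.}
Suppose $H\in\cH_x$ but that the local condition fails at some $t$, i.e.\ the set $B:=\{\omega\in\Omega_t:\, Q(N_t(\omega))>0 \text{ for some }Q\in\cP_t(\omega)\}$ is not $\cP$-polar. Since $B$ is the projection onto $\Omega_t$ of $\graph(\cP_t)\cap\{(\omega,Q):\, Q(N_t(\omega))>0\}$, it is analytic, so by the reduction above $(P_0\otimes\cdots\otimes P_{t-1})(B)>0$ for suitable selectors $P_0,\dots,P_{t-1}$. On $B$ I would apply a measurable selection theorem (Jankov--von Neumann, cf.\ \cite[Ch.~7]{BertsekasShreve.78}) to that analytic set to obtain a universally measurable selector $\bar P_t$ of $\cP_t$ with $\bar P_t(\omega)(N_t(\omega))>0$ for all $\omega\in B$; extend $\bar P_t$ arbitrarily off $B$ and pick arbitrary selectors $\bar P_{t+1},\dots,\bar P_{T-1}$. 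Then $\bar P:=P_0\otimes\cdots\otimes P_{t-1}\otimes\bar P_t\otimes\cdots\otimes\bar P_{T-1}\in\cP$, and Fubini yields $\bar P(x+H\sint S_{t+1}<0)\geq\int_B \bar P_t(\omega)(N_t(\omega))\,(P_0\otimes\cdots\otimes P_{t-1})(d\omega)>0$, the last inequality because the integrand is strictly positive on the positive-measure set $B$. This contradicts $H\in\cH_x$.

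\textbf{Main obstacle.}
The step I expect to be the crux is the selection in the last paragraph, and specifically verifying that $(\omega,Q)\mapsto Q(N_t(\omega))$ is analytically (e.g.\ Borel or upper semianalytic) measurable, so that $\graph(\cP_t)\cap\{(\omega,Q):\, Q(N_t(\omega))>0\}$ is analytic and Jankov--von Neumann applies. The complication is that $H$, being predictable with respect to the universal completion rather than the Borel $\sigma$-field, makes $x+H\sint S_{t+1}$ only universally measurable, so one cannot invoke the Borel Fubini results for stochastic kernels directly and must instead use their universally measurable counterparts (as in \cite{BertsekasShreve.78, BouchardNutz.13}); this is a technical point rather than a conceptual one, and it is exactly what the choice of $\cF_t$ as the universal completion is designed to accommodate.
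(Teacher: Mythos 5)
Your overall architecture is the same as the paper's: Fubini for the ``if'' direction (which is fine as you state it), and, for the converse, a Jankov--von Neumann selection of a ``bad'' kernel $\bar P_t$ followed by forming the product measure $P\otimes_t\bar P_t$ to contradict $H\in\cH_x$. However, the step you correctly single out as the crux is not actually resolved by your proposal, and the justification you give for it is wrong. You assert that $B$ is analytic because it is the projection of $\graph(\cP_t)\cap\{(\omega,Q):\,Q(N_t(\omega))>0\}$. But since $H_{t+1}$ is only universally measurable, the set $N_t(\omega)=\{\omega':\,x+H\sint S_{t+1}(\omega,\omega')<0\}$ has a universally measurable (not Borel) indicator in $(\omega,\omega')$, so $(\omega,Q)\mapsto Q(N_t(\omega))$ is only universally measurable; intersecting the analytic set $\graph(\cP_t)$ with the universally measurable set $\{Q(N_t(\omega))>0\}$ yields a set that is in general neither analytic nor even one whose projection is universally measurable. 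The Jankov--von Neumann theorem requires an analytic set, and there is no ``universally measurable counterpart'' that uniformizes arbitrary universally measurable subsets of a product; so the selection step, as written, does not go through.

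The paper's way around this is a parametrization trick that you should adopt: freeze the wealth and the portfolio as \emph{free Borel variables}. The map $(y,h,\omega,P)\mapsto E_P[(y+h\DS_{t+1}(\omega,\cdot))^-]$ is Borel by \cite[Proposition~7.29]{BertsekasShreve.78} because $\DS_{t+1}$ is Borel; hence $\Psi(y,h,\omega):=\{P\in\cP_t(\omega):\,E_P[(y+h\DS_{t+1}(\omega,\cdot))^-]>0\}$ has analytic graph, and Jankov--von Neumann yields a universally measurable selector $(y,h,\omega)\mapsto P_t(y,h,\omega)\in\cP_t(\omega)$ with $P_t\in\Psi$ on $\{\Psi\neq\emptyset\}$. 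Only \emph{then} does one substitute the universally measurable data, setting $P'_t(\omega):=P_t(x+H\sint S_t(\omega),H_{t+1}(\omega),\omega)$, which is universally measurable because compositions of universally measurable maps are \cite[Proposition~7.44]{BertsekasShreve.78}. With this kernel, $B=\{\omega:\,P'_t(\omega)(N_t(\omega))>0\}$ is universally measurable, and your final Fubini computation with $P\otimes_t P'_t$ closes the argument exactly as you intended. So the gap is confined to the selection step, but it is a genuine one: the order of quantifiers (select first over the Borel parameters, compose with $H$ afterwards) is precisely what makes the argument legitimate.
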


\begin{proof}
  The ``if'' implication is a direct consequence of Fubini's theorem; we show the converse. Let $t\in\{0,\dots, T-1\}$, let $H\in\cH_x$ and set
  \[
    B:=\big\{\omega\in \Omega_t:\, \mbox{$\{x+H\sint S_{t+1}(\omega,\cdot)\geq0\}$ is not $\cP_t(\omega)$-polar}\big\};
  \]
  then we need to prove that $B$ is $\cP$-polar.
  It follows from \cite[Proposition~7.29, p.\,144]{BertsekasShreve.78} that the mapping
  \[
    \R\times \R^d\times \Omega_t\times \fP(\omega_1)\to \overline{\R},\quad (x,h,\omega,P)\mapsto E_P[(x+h\DS_{t+1}(\omega,\cdot))^-]
  \]
  is Borel-measurable. As the graph of $\cP_t$ is analytic, this implies that the set-valued mapping
  \[
    \Psi(x,h,\omega):=\{P\in\cP_t(\omega): E_P[(x+h\DS_{t+1}(\omega,\cdot))^-]>0\}
  \]
  has an analytic graph. Using the Jankov--von Neumann Theorem \cite[Proposition~7.49, p.\,182]{BertsekasShreve.78}, we can then find a universally measurable mapping
  \[
    P_t: \R\times \R^d\times \Omega_t\to \fP(\Omega_1)
  \]
  such that $P_t(x,h,\omega)\in\cP_t(\omega)$ for all $x,h,\omega$ and $P_t(x,h,\omega)\in \Psi(x,h,\omega)$ on $\{\Psi\neq\emptyset\}$.
  Since compositions of universally measurable mappings remain universally measurable \cite[Proposition~7.44, p.\,172]{BertsekasShreve.78}, the kernel
  \[
    \omega\mapsto P'_t(\omega):=P_t(x+H\sint S_t(\omega), H_t(\omega),\omega)
  \]
  is again universally measurable. We then have
  \[
    B=\big\{\omega\in \Omega_t:\, P'_t(\omega)\{x+H\sint S_{t+1}(\omega,\cdot)<0\}>0\big\},
  \]
  showing that $B$ is universally measurable. Suppose that there exists $P\in\cP$ such that $P(B)>0$, then if $P':=P\otimes_t P'_t\in\fP(\Omega_{t+1})$ is the product measure formed from $P'_t$ and the restriction of $P$ to $\Omega_t$, we have $P'\{x+ H\sint S_{t+1}<0\}>0$, contradicting that $H\in\cH_x$ and $P'\in\cP$. Therefore, $B$ is $\cP$-polar.
\end{proof}

The conditions of Theorem~\ref{th:multiperiodBdd} are in force throughout the remainder of its proof.
In order to employ dynamic programming, we introduce the conditional value functions at the intermediate times. There are some measure-theoretic issues related to the simultaneous presence of suprema and infima in our problem, so we shall work with certain regular versions of the value functions.
Set $U(x):=-\infty$ for $x<0$ and denote $U_T:=U$. If $\omega\in\Omega_t$ and $\omega'\in\Omega_1$, we also write $\omega\otimes_t\omega'$ for $(\omega,\omega')\in\Omega_{t+1}$. For $t=T-1,\dots,0$ and $\omega\in\Omega_t$, define
\begin{align}
  U_t(\omega,x) &:= \sup_{h\in \Q^d} \inf_{P\in\cP_t(\omega)} E_P[U_{t+1}(\omega\otimes_t\cdot, x+ h\DS_{t+1}(\omega,\cdot))],\quad x>0,\label{eq:defUt}\\
  U_t(\omega,0) &:= \lim_{x\downarrow 0} U_t(\omega,x)\label{eq:defUtzero}
\end{align}
as well as $U_t(\omega,x)=-\infty$ for $x<0$. The following lemma ensures that $U_t$ is well defined for all $t$.

\begin{lemma}\label{le:UisLSA}
  Let $t\in\{0,\dots,T\}$. Then $U_t: \Omega_t\times [0,\infty) \to[-\infty,\infty)$ is lower semianalytic, bounded from above and satisfies Assumption~\ref{as:U1}.
\end{lemma}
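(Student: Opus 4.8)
The plan is to prove Lemma~\ref{le:UisLSA} by backward induction on $t$, starting from $t=T$ where all three properties hold by hypothesis on $U=U_T$. The inductive step is the heart of the matter: assuming $U_{t+1}$ is lower semianalytic, bounded from above, and satisfies Assumption~\ref{as:U1}, I must verify the same three properties for $U_t$ defined by \eqref{eq:defUt}--\eqref{eq:defUtzero}.

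First I would address boundedness from above, which is the easy part: since $U_{t+1}\le c$ for a constant $c$, every inner expectation in \eqref{eq:defUt} is at most $c$, so the double supremum/infimum is at most $c$, and the limit defining $U_t(\omega,0)$ inherits the bound. Next I would establish Assumption~\ref{as:U1}(ii) for $x\mapsto U_t(\omega,x)$, i.e.\ monotonicity, concavity and continuity in $x$ for fixed $\omega$. Monotonicity is immediate from monotonicity of $U_{t+1}$. Concavity follows because, for fixed $P$ and $h$, the map $x\mapsto E_P[U_{t+1}(\omega\otimes_t\cdot,\,x+h\DS_{t+1}(\omega,\cdot))]$ is concave; taking the supremum over $h\in\Q^d$ of an infimum over $P$ of concave functions — where one jointly rescales $(x,h)$ — preserves concavity (this is the standard argument: if $h_i$ is $\delta$-optimal for $x_i$, then $\lambda h_1+(1-\lambda)h_2$ is admissible-in-spirit for $\lambda x_1+(1-\lambda)x_2$). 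Concavity on $(0,\infty)$ plus the definition \eqref{eq:defUtzero} at $0$ then yields continuity on $[0,\infty)$, using that a finite concave function on an interval is continuous on the interior and the value at the left endpoint is defined precisely as the limit. I should also note that $U_t$ is bounded below on $\{x>0\}$, e.g.\ by choosing $h=0$ and using that $U_{t+1}(\cdot,x)$ is bounded below (Assumption~\ref{as:U1}(i) for $U_{t+1}$), so $U_t(\omega,x)\ge \inf_P E_P[U_{t+1}(\omega\otimes_t\cdot,x)]\ge$ the lower bound for $U_{t+1}(\cdot,x)$; this gives measurability-relevant finiteness on $(0,\infty)$.

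The main obstacle is lower semianalyticity of $(\omega,x)\mapsto U_t(\omega,x)$, which is where the measurable-selection machinery from \cite{BertsekasShreve.78} enters. The strategy is to build up \eqref{eq:defUt} in stages. By the inductive hypothesis $U_{t+1}$ is lower semianalytic on $\Omega_{t+1}\times[0,\infty)$; composing with the Borel map $(\omega,\omega',x,h)\mapsto(\omega\otimes_t\omega',\,x+h\DS_{t+1}(\omega,\omega'))$ keeps lower semianalyticity (preimages of analytic sets under Borel maps are analytic). Then the partial-integration step — passing from a lower semianalytic integrand to $(\omega,x,h,P)\mapsto E_P[U_{t+1}(\omega\otimes_t\cdot,x+h\DS_{t+1}(\omega,\cdot))]$ — preserves lower semianalyticity by \cite[Proposition~7.48, p.\,180]{BertsekasShreve.78}, provided one first truncates from above (the function is bounded above by $c$, so $c-U_{t+1}\ge0$ and one can apply the nonnegative version, or argue directly since the upper bound makes the integral well defined and the Proposition applies to the bounded-above case). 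Next, the infimum over $P\in\cP_t(\omega)$: since $\graph(\cP_t)$ is analytic, \cite[Proposition~7.47, p.\,179]{BertsekasShreve.78} shows that infimizing a lower semianalytic function over the sections of an analytic graph yields a lower semianalytic function of $(\omega,x,h)$. Finally the supremum over the countable set $\Q^d$ is a countable supremum, hence lower semianalytic (a countable supremum of lower semianalytic functions is lower semianalytic, since $\{\sup_n f_n<c\}=\bigcap_n\{f_n<c\}$ and analytic sets are closed under countable intersections). This yields lower semianalyticity of $U_t$ on $\Omega_t\times(0,\infty)$; for the value at $x=0$, \eqref{eq:defUtzero} expresses $U_t(\omega,0)=\lim_{n} U_t(\omega,1/n)=\sup_n U_t(\omega,1/n)$ (the sup by monotonicity), so $\{U_t(\cdot,0)<c\}=\bigcap_n\{U_t(\cdot,1/n)<c\}$ is analytic, and patching $\{x=0\}$, $\{x<0\}$ (where $U_t\equiv-\infty$) and $\{x>0\}$ gives lower semianalyticity on all of $\Omega_t\times[-\infty,\infty)$ restricted to $[0,\infty)$. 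The only subtlety requiring care is the repeated need to have the integrand bounded above (so that the Borel-measurability/lower-semianalyticity statements for conditional expectations in \cite{BertsekasShreve.78} apply cleanly without integrability pathologies); this is exactly why the hypothesis "$U$ bounded from above" is carried through the induction, and I would flag it explicitly at each application.
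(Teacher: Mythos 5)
Your overall strategy coincides with the paper's: backward induction from $t=T$, with the measurability chain built in exactly the same stages (precomposition of the lower semianalytic $U_{t+1}$ with a Borel map, \cite[Proposition~7.48, p.\,180]{BertsekasShreve.78} for the integration step, \cite[Proposition~7.47, p.\,179]{BertsekasShreve.78} for the infimum over the analytic graph of $\cP_t$, a countable supremum over $\Q^d$, and a separate patch at $x=0$). The one step that does not go through as written is concavity. Because the supremum in \eqref{eq:defUt} is deliberately restricted to $h\in\Q^d$ (that restriction is precisely what makes the countable-supremum measurability argument work), the convex combination $\lambda h_1+(1-\lambda)h_2$ of two $\delta$-optimizers is \emph{not} an admissible competitor when $\lambda$ is irrational, so your ``standard argument'' only yields concavity along rational $\lambda$ --- in particular midpoint concavity. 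The paper closes this gap in two stages: it proves midpoint concavity using $(h_1+h_2)/2\in\Q^d$, and then upgrades to genuine concavity by observing that $U_t(\omega,\cdot)$ is bounded on closed subintervals of $(0,\infty)$ (which you do establish, via the upper bound and the lower bound obtained from $h=0$) and that a locally bounded midpoint-concave function is concave, citing \cite[p.\,12]{Donoghue.69}. You should make this two-stage argument explicit; as stated, your concavity claim is unjustified.

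Two smaller slips that do not affect the conclusion. First, since $U_t(\omega,\cdot)$ is nondecreasing, $U_t(\omega,0)=\lim_{x\downarrow0}U_t(\omega,x)$ is the \emph{infimum} of $U_t(\omega,1/n)$ over $n$, not the supremum, so $\{U_t(\cdot,0)<c\}=\bigcup_n\{U_t(\cdot,1/n)<c\}$ is a countable \emph{union} of analytic sets (still analytic, so the argument survives; this is also how the paper phrases it). Second, the identity $\{\sup_n f_n<c\}=\bigcap_n\{f_n<c\}$ is false in general; the correct justification for closure of lower semianalytic functions under countable suprema is \cite[Lemma~7.30(2), p.\,177]{BertsekasShreve.78}, which is the fact you are in effect invoking anyway.
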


\begin{proof}
  The claim is true by assumption for $t=T$; we show the induction step from $t+1$ to $t$.
  It is elementary to see that $U_t(\omega,\cdot)$ is nondecreasing and bounded from above. Let $y>0$; then by assumption there exists $c\in\R$ such that $U_{t+1}(\cdot,y)\geq c$. Considering $h=0$ in~\eqref{eq:defUt}, it follows that
  \[
    U_t(\omega,x)\geq \inf_{P\in\cP_t(\omega)} E_P[U_{t+1}(\omega\otimes_t\cdot, x)] \geq c.
  \]
  Next, we show the concavity on $(0,\infty)$. Let $x_1,x_2\in(0,\infty)$, let $\eps>0$ and let $h_1,h_2\in\Q^d$ be such that
  \[
    U_t(\omega,x_i) \leq \eps + \inf_{P\in\cP_t(\omega)} E_P[U_{t+1}(\omega\otimes_t\cdot, x_i+ h_i\DS_{t+1}(\omega,\cdot))],\quad i=1,2.
  \]
  Using the concavity of $U_{t+1}$ and $(h_1+h_2)/2\in\Q^d$, we see that
  \begin{align*}
  &\frac{U_t(\omega,x_1)+ U_t(\omega,x_2)}{2} \\
  &\leq \eps + \inf_{P\in\cP_t(\omega)} E_P\left[U_{t+1}\left(\omega\otimes_t\cdot, \frac{x_1+x_2}{2}+ \frac{h_1+h_2}{2}\DS_{t+1}(\omega,\cdot)\right)\right]\\
  &\leq \eps + \sup_{h\in\Q^d} \inf_{P\in\cP_t(\omega)} E_P\left[U_{t+1}\left(\omega\otimes_t\cdot, \frac{x_1+x_2}{2}+ h\DS_{t+1}(\omega,\cdot)\right)\right].
  \end{align*}
  As $\eps>0$ was arbitrary, it follows that $[U_t(\omega,x_1)+ U_t(\omega,x_2)]/2 \leq U_t(\omega,[x_1+x_2]/2)$; i.e., $U_t(\omega,\cdot)$ is midpoint-concave on $(0,\infty)$. Since moreover $U_t(\omega,\cdot)$ is bounded on any closed subinterval of $(0,\infty)$, this implies that $U_t(\omega,\cdot)$ is indeed concave on $(0,\infty)$; see, e.g., \cite[p.\,12]{Donoghue.69}. In particular, $U_t(\omega,\cdot)$ is continuous on $(0,\infty)$. In view of the definition~\eqref{eq:defUtzero}, both concavity and continuity extend to $[0,\infty)$.

  It remains to show that $U_t$ is lower semianalytic. Since the precomposition of a lower semianalytic function with a Borel mapping is again lower semianalytic \cite[Lemma~7.30(3), p.\,177]{BertsekasShreve.78}, we see that the function
  \[
    \Omega_t\times \Omega_1 \times (0,\infty)\times \R^d \to \overline{\R},\quad (\omega,\omega',x,h) \mapsto U_{t+1}(\omega\otimes_t\omega', x+ h\DS_{t+1}(\omega,\omega'))
  \]
  is lower semianalytic. Using a fact about Borel kernels acting on lower semianalytic functions \cite[Proposition~7.48, p.\,180]{BertsekasShreve.78}, we can deduce that
  \[
    \Omega_t\times \fP(\Omega_1) \times (0,\infty)\times \R^d \to \overline{\R},\quad (\omega,P,x,h) \mapsto E_P[U_{t+1}(\omega\otimes_t\cdot, x+ h\DS_{t+1}(\omega,\cdot))]
  \]
  is also lower semianalytic. Since the graph of $\cP_t$ is analytic, it then follows by \cite[Proposition~7.47, p.\,179]{BertsekasShreve.78} that
  \[
    \phi: \Omega_t\times (0,\infty)\times \R^d \to \overline{\R},\quad \phi(\omega,x,h):=\inf_{P\in\cP_t(\omega)} E_P[U_{t+1}(\omega\otimes_t\cdot, x+ h\DS_{t+1}(\omega,\cdot))]
  \]
  is lower semianalytic. Finally, we have $U_t(\omega,x)= \sup_{h\in \Q^d} \phi(\omega,x,h)$ by definition, and the supremum of a \emph{countable} family of lower semianalytic functions is still lower semianalytic \cite[Lemma~7.30(2), p.\,177]{BertsekasShreve.78}. Thus, $U_t$ is lower semianalytic as a function on $\Omega_t\times (0,\infty)$. Finally, $U_t: \Omega_t\times [0,\infty)\to \overline{\R}$ is also lower semianalytic; indeed, for each $c\in\R$, it follows from~\eqref{eq:defUtzero} that $\{(\omega,x)\in\Omega_t\times [0,\infty):  U_t(\omega,x)<c\}$ is the countable union of the analytic sets $\{(\omega,x)\in\Omega_t\times (0,\infty):  U_t(\omega,x)<c\}$ and $\{(\omega,x)\in\Omega_t\times [0,\infty):  x=0,\, U_t(\omega,1/n)<c\}$, $n\geq1$.
\end{proof}

While the nonstandard definitions~\eqref{eq:defUt} and~\eqref{eq:defUtzero} were made to ensure the crucial measurability properties of $U_t$, the following lemma shows that $U_t$ is indeed a version of the conditional value function at time $t$.

\begin{lemma}\label{le:supOverQsame}
  Let $t\in \{0,\dots,T-1\}$. For $\cP$-quasi-every $\omega\in\Omega_t$,
  \begin{equation}\label{eq:supOverQsame}
  U_t(\omega,x)
  = \sup_{h\in D_{t,x}(\omega)} \inf_{P\in\cP_t(\omega)} E_P[U_{t+1}(\omega\otimes_t\cdot, x+ h\DS_{t+1}(\omega,\cdot))], \quad x\geq0,
  \end{equation}
  where $D_{t,x}(\omega):=\{h\in\R^d:\, x+h\DS_{t+1}(\omega,\cdot) \geq 0\; \cP_t(\omega)\qs\}$.
\end{lemma}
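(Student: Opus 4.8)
The plan is to fix $t\in\{0,\dots,T-1\}$ and, for $\cP$-quasi-every $\omega\in\Omega_t$, derive~\eqref{eq:supOverQsame} from the one-period theory of Section~\ref{se:onePeriod} applied to the market on $(\Omega_1,\cF_1)$ generated by the random vector $\DS_{t+1}(\omega,\cdot)$ under the set $\cP_t(\omega)$. By Lemma~\ref{le:NAlocal} there is a $\cP$-polar set outside of which $\NA(\cP_t(\omega))$ holds; by Lemma~\ref{le:UisLSA} the function $\tilde U(\omega',y):=U_{t+1}(\omega\otimes_t\omega',y)$ is, for every fixed such $\omega$, a random utility function on $(\Omega_1,\cF_1)$ satisfying Assumption~\ref{as:U1} and bounded from above, so that the integrability hypothesis~\eqref{eq:int1} holds trivially for this market. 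Hence Theorem~\ref{th:existence1underInt} and Lemmas~\ref{le:orthogonal}, \ref{le:admCompact}, \ref{le:countableSup}, \ref{le:valueFctContinuous} are all available. Write $\phi(x,h):=\inf_{P\in\cP_t(\omega)}E_P[\tilde U(\cdot,x+h\DS_{t+1}(\omega,\cdot))]$ and let $\bar U_t(\omega,x)$ denote the right-hand side of~\eqref{eq:supOverQsame}, so that $\bar U_t(\omega,x)=\sup_{h\in D_{t,x}(\omega)}\phi(x,h)$. By the quoted results $x\mapsto\bar U_t(\omega,x)$ is nondecreasing, concave and continuous on $[0,\infty)$, finite for $x>0$, and $\bar U_t(\omega,x)=\sup_{h\in\cD}\phi(x,h)$ for every countable dense subset $\cD$ of $D_{t,x}(\omega)$. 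It remains to show $U_t(\omega,\cdot)=\bar U_t(\omega,\cdot)$, where $U_t$ is given by~\eqref{eq:defUt}--\eqref{eq:defUtzero}.

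For $x>0$ I would argue both inequalities. The bound $U_t(\omega,x)\le\bar U_t(\omega,x)$ is immediate: if $h\in\Q^d\setminus D_{t,x}(\omega)$, then $P\{x+h\DS_{t+1}(\omega,\cdot)<0\}>0$ for some $P\in\cP_t(\omega)$, and since $U_{t+1}$ is bounded above and equals $-\infty$ on $\{y<0\}$ we get $\phi(x,h)=-\infty$; as $0\in\Q^d\cap D_{t,x}(\omega)$, this shows $U_t(\omega,x)=\sup_{h\in\Q^d\cap D_{t,x}(\omega)}\phi(x,h)\le\bar U_t(\omega,x)$. For $U_t(\omega,x)\ge\bar U_t(\omega,x)$ one must see that the rational admissible portfolios still attain the supremum. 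Here I would project onto $L:=\linspan\supp_{\cP_t(\omega)}(\DS_{t+1}(\omega,\cdot))$: by Lemma~\ref{le:orthogonal}, $h\DS_{t+1}(\omega,\cdot)=(\proj_L h)\DS_{t+1}(\omega,\cdot)$ $\cP_t(\omega)$-q.s.\ for every $h\in\R^d$, whence $\phi(x,h)=\phi(x,\proj_L h)$, $D_{t,x}(\omega)=K_{t,x}(\omega)+L^\bot$ with $K_{t,x}(\omega):=D_{t,x}(\omega)\cap L$ compact (Lemma~\ref{le:admCompact}), and therefore $U_t(\omega,x)=\sup_{k\in\proj_L(\Q^d)}\phi(x,k)$ while $\bar U_t(\omega,x)=\sup_{k\in K_{t,x}(\omega)}\phi(x,k)$. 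Now $\proj_L$ is a continuous surjection onto $L$, so $\proj_L(\Q^d)$ is dense in $L$; moreover $h\mapsto\phi(x,h)$ is upper semicontinuous on $\R^d$ (apply Fatou to $\tilde U^+\le\mathrm{const}$ as in the proof of Theorem~\ref{th:existence1underInt}, using the continuity of $\tilde U(\omega',\cdot)$) and concave, with effective domain containing $\ri(K_{t,x}(\omega))$ by the scaling argument in the proof of Lemma~\ref{le:countableSup}. Provided $K_{t,x}(\omega)$ is full-dimensional in $L$, so that $\mathrm{aff}(K_{t,x}(\omega))=L$, a maximizer $k^\ast\in K_{t,x}(\omega)$ of $\phi(x,\cdot)$ (which exists by upper semicontinuity and compactness) can be approached along a segment from $\ri(K_{t,x}(\omega))$, on which $\phi(x,\cdot)$ is continuous; approximating points of that segment by elements of $\proj_L(\Q^d)$, which lie in the $L$-open set $\ri(K_{t,x}(\omega))$, then yields $U_t(\omega,x)\ge\phi(x,k^\ast)=\bar U_t(\omega,x)$.

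The case $x=0$ follows: by~\eqref{eq:defUtzero} and the case $x>0$, $U_t(\omega,0)=\lim_{x\downarrow0}U_t(\omega,x)=\lim_{x\downarrow0}\bar U_t(\omega,x)$, which equals $\bar U_t(\omega,0)$ by the continuity of $\bar U_t(\omega,\cdot)$; and the right-hand side of~\eqref{eq:supOverQsame} at $x=0$ is precisely $\bar U_t(\omega,0)$, because $\NA(\cP_t(\omega))$ forces $h\DS_{t+1}(\omega,\cdot)=0$ $\cP_t(\omega)$-q.s.\ for every $h\in D_{t,0}(\omega)$, so that the inner expectation there does not depend on $h\in D_{t,0}(\omega)$.

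The main obstacle is the reverse inequality for $x>0$, i.e.\ the passage from $\sup_{h\in\Q^d}$ to $\sup_{h\in D_{t,x}(\omega)}$: once $\DS_{t+1}(\omega,\cdot)$ is unbounded, the effective domain of $\phi(x,\cdot)$ need not contain a Euclidean neighbourhood of its relative-interior points, so one cannot simply claim that $\Q^d\cap D_{t,x}(\omega)$ is dense in $D_{t,x}(\omega)$. What makes the argument go through is the structural fact that, under $\NA(\cP_t(\omega))$ together with the standing assumption $S\ge0$ (which forces $\DS_{t+1}(\omega,\cdot)$ to be bounded below $\cP_t(\omega)$-q.s.), one has $\overline{\mathrm{cone}}\,\supp_{\cP_t(\omega)}(\DS_{t+1}(\omega,\cdot))=L$ and $K_{t,x}(\omega)$ is full-dimensional in $L$; then $\mathrm{aff}(K_{t,x}(\omega))=L$, and density of $\proj_L(\Q^d)$ in $L$ is exactly what is required. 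Granting this, the remaining steps are routine bookkeeping with the one-period results of Section~\ref{se:onePeriod}.
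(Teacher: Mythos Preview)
Your plan is sound and would go through, but it is considerably more elaborate than what the paper does, and the extra machinery is not needed. Both you and the paper handle the inequality $U_t(\omega,x)\le\bar U_t(\omega,x)$ and the case $x=0$ in the same way. The difference is in the reverse inequality for $x>0$.

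You project onto $L$, argue that $\proj_L(\Q^d)$ is dense in $L$, and then need the ``structural fact'' that $K_{t,x}(\omega)$ is full-dimensional in $L$ so that $\ri(K_{t,x}(\omega))$ is open in $L$ and can be hit by $\proj_L(\Q^d)$. You leave this fact unproved. The paper bypasses the projection entirely: since $S\ge0$ $\cP$-q.s., for $\cP$-quasi-every $\omega$ one has $\DS_{t+1}(\omega,\cdot)\ge -S_t(\omega)$ $\cP_t(\omega)$-q.s., and therefore every $h\in[0,\infty)^d$ with $\sum_i h^i\le x/|S_t(\omega)|$ lies in $D_{t,x}(\omega)$. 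Thus $D_{t,x}(\omega)$ contains a set with nonempty interior in $\R^d$, hence (being convex) has nonempty interior itself, and so $\Q^d\cap D_{t,x}(\omega)$ is already dense in $D_{t,x}(\omega)$. Lemma~\ref{le:countableSup} then applies directly with $\cD=\Q^d\cap D_{t,x}(\omega)$, giving $U_t(\omega,x)\ge\bar U_t(\omega,x)$ in one line.

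Note that your unproved structural fact is itself most easily obtained from exactly this observation: once $D_{t,x}(\omega)$ has nonempty Euclidean interior, $K_{t,x}(\omega)=\proj_L(D_{t,x}(\omega))$ has nonempty interior in $L$ because $\proj_L$ is an open map. So your detour through $L$ ultimately rests on the same simplex argument, just wrapped in additional steps (upper semicontinuity, explicit approximation along segments) that Lemma~\ref{le:countableSup} already packages for you. Your remark that $\overline{\mathrm{cone}}\,\supp_{\cP_t(\omega)}(\DS_{t+1}(\omega,\cdot))=L$ is true under $\NA(\cP_t(\omega))$ but does not by itself yield the full-dimensionality of $K_{t,x}(\omega)$; the lower bound on $\DS_{t+1}$ coming from $S\ge0$ is doing the real work.
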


\begin{proof}
  Let $\omega\in\Omega_t$ and $x\geq0$. If $h\in\Q^d\setminus D_{t,x}(\omega)$, then for some $P\in\cP_t(\omega)$, we have $P\{x+ h\DS_{t+1}(\omega,\cdot)<0\}>0$ and hence $E_P[U_{t+1}(\omega\otimes_t\cdot, x+ h\DS_{t+1}(\omega,\cdot))]=-\infty$. This implies the inequality ``$\leq$'' of~\eqref{eq:supOverQsame}. To see the reverse inequality, we first focus on $x>0$. As $S_{t+1}$ is nonnegative $\cP$-q.s., we have $\Delta S_{t+1}(\omega,\cdot)\geq -S_t(\omega)$ $\cP_t(\omega)$-q.s.\ for all $\omega$ outside a $\cP$-polar set (argue as in the proof of Lemma~\ref{le:Hxpointwise}). It follows that any $h=(h^1,\dots,h^d)\in[0,\infty)^d$ with $|h^1+\cdots+h^d|\leq x/|S_t(\omega)|$ is contained in the convex set $D_{t,x}(\omega)$ and thus that $\Q^d$ is dense in $D_{t,x}(\omega)$. If $\omega$ is also such that $\NA(\cP_t(\omega))$ is satisfied, then the claimed inequality follows by Lemma~\ref{le:countableSup}, for any $x>0$. To extend the result to $x=0$, it then suffices to note that both sides of~\eqref{eq:supOverQsame} are continuous in $x\in[0,\infty)$: this holds for $U_t(\omega,\cdot)$ by the definition~\eqref{eq:defUtzero}, and for the right-hand side by Lemma~\ref{le:valueFctContinuous}. It remains to recall Lemma~\ref{le:NAlocal}(ii).
\end{proof}

\begin{lemma}\label{le:optimalhSelector}
  Let $t\in\{0,\dots,T-1\}$, $x\geq0$ and $H\in\cH_x$. There exists a universally measurable mapping $\hat{h}_t: \Omega_t\to \R^d$ such that $x+ H\sint S_t(\omega)+ \hat{h}_t(\omega)\DS_{t+1}(\omega,\cdot)\geq0$ $\cP_t(\omega)$-q.s.\ and
  \begin{equation}\label{eq:optimalhSelectorDef}
   \inf_{P\in\cP_t(\omega)} E_P[U_{t+1}(\omega\otimes_t\cdot, x+ H\sint S_t(\omega)+ \hat{h}_t(\omega)\DS_{t+1}(\omega,\cdot))] = U_t(\omega,x+ H\sint S_t(\omega))
  \end{equation}
  for $\cP$-quasi-every $\omega\in\Omega_t$.%
\end{lemma}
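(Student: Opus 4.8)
The plan is to realize $\hat{h}_t$ as a measurable selector of the argmax set in~\eqref{eq:supOverQsame}, obtained via the Jankov--von Neumann selection theorem applied to an appropriate lower semianalytic function. First I would fix $H\in\cH_x$ and introduce the shifted capital $y(\omega):=x+H\sint S_t(\omega)$, which is universally measurable and nonnegative $\cP$-q.s.\ (using $H\in\cH_x$ and Lemma~\ref{le:Hxpointwise}). I would then consider the function
\[
  \psi:\Omega_t\times[0,\infty)\times\R^d\to\overline{\R},\qquad \psi(\omega,y,h):=\inf_{P\in\cP_t(\omega)} E_P\big[U_{t+1}\big(\omega\otimes_t\cdot,\, y+h\DS_{t+1}(\omega,\cdot)\big)\big],
\]
which is lower semianalytic by exactly the argument in the proof of Lemma~\ref{le:UisLSA} (precomposition with the Borel map $(\omega,y,h)\mapsto(\omega,\DS_{t+1}(\omega,\cdot))$, then integration against the Borel-analytic kernel $\cP_t$, then infimum over the analytic graph $\graph(\cP_t)$). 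Likewise $U_t(\omega,y)$, viewed through the Borel map $(\omega,h)\mapsto U_t(\omega,y)$ composed suitably, is lower semianalytic; hence the set
\[
  G:=\big\{(\omega,h)\in\Omega_t\times\R^d:\ y(\omega)+h\DS_{t+1}(\omega,\cdot)\geq0\ \ \cP_t(\omega)\text{-q.s.},\ \ \psi(\omega,y(\omega),h)\geq U_t(\omega,y(\omega))\big\}
\]
has an analytic graph: the admissibility constraint defines an analytic set by the Borel-measurability of $(\omega,h,P)\mapsto E_P[(y(\omega)+h\DS_{t+1}(\omega,\cdot))^-]$ together with analyticity of $\graph(\cP_t)$ (as in Lemma~\ref{le:Hxpointwise}), and the second constraint is the difference of a lower semianalytic and an (upper semianalytic, in $h$ trivially Borel) function compared to zero.

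Next I would argue that the $\omega$-section of $G$ is nonempty for $\cP$-quasi-every $\omega$. This is precisely where the one-period existence result is invoked: for $\omega$ outside the $\cP$-polar exceptional sets of Lemma~\ref{le:NAlocal}(ii), Lemma~\ref{le:supOverQsame}, and Lemma~\ref{le:Hxpointwise}, the one-period market $\DS_{t+1}(\omega,\cdot)$ under $\cP_t(\omega)$ satisfies $\NA(\cP_t(\omega))$, the random utility $U_{t+1}(\omega\otimes_t\cdot,\cdot)$ satisfies Assumption~\ref{as:U1} (by Lemma~\ref{le:UisLSA}, using that it is bounded from above, hence~\eqref{eq:int1} is automatic), and therefore Theorem~\ref{th:existence1underInt} yields some $\hat{h}\in D_{t,y(\omega)}(\omega)$ attaining the supremum, which by Lemma~\ref{le:supOverQsame} equals $U_t(\omega,y(\omega))$; thus $(\omega,\hat{h})\in G$. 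Having an analytic graph with nonempty sections off a $\cP$-polar set, the Jankov--von Neumann theorem~\cite[Proposition~7.49]{BertsekasShreve.78} provides a universally measurable $\hat{h}_t:\Omega_t\to\R^d$ with $(\omega,\hat{h}_t(\omega))\in G$ whenever the section is nonempty; on the exceptional $\cP$-polar set set $\hat{h}_t:=0$ (this is harmless). Composition of $\hat{h}_t$ with the universally measurable $y(\cdot)$ preserves universal measurability~\cite[Proposition~7.44]{BertsekasShreve.78}, so the two displayed conclusions of the lemma hold for $\cP$-quasi-every $\omega$.

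The main obstacle, and the only genuinely delicate point, is establishing that $G$ (equivalently, the relevant functions) is analytic despite the joint presence of the supremum over $h$ — which is absorbed into the selector and causes no trouble — and the infimum over $P\in\cP_t(\omega)$, which is where the lower semianalyticity machinery of~\cite[Chapter~7]{BertsekasShreve.78} must be deployed carefully: one needs the image of the analytic set $\graph(\cP_t)$ under the lower semianalytic integrand to stay lower semianalytic (Proposition~7.47), and one needs the comparison $\psi\ge U_t$ to cut out an analytic, not merely universally measurable, set — this forces us to keep $U_t$ on the lower-semianalytic side of the inequality and to have verified its lower semianalyticity already in Lemma~\ref{le:UisLSA}. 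The admissibility constraint $h\in D_{t,y(\omega)}(\omega)$ requires the same Borel-measurability fact for $(\omega,h,P)\mapsto E_P[(y(\omega)+h\DS_{t+1}(\omega,\cdot))^-]$ that was used in Lemma~\ref{le:Hxpointwise}, now with the extra parameter $h$, which is routine. Everything else — nonemptiness of sections via Theorem~\ref{th:existence1underInt}, the selection theorem, and closure under composition — is then immediate.
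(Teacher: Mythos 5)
There is a genuine gap at exactly the point you flag as ``the only genuinely delicate point'': the set $G$ is not analytic, and the argument you sketch for its analyticity does not work. The constraint $\psi(\omega,y(\omega),h)\geq U_t(\omega,y(\omega))$ compares two functions that are both merely \emph{lower} semianalytic ($\psi$ is an infimum over the analytic graph of $\cP_t$ of integrals of a lower semianalytic integrand, so \cite[Proposition~7.47]{BertsekasShreve.78} gives lower semianalyticity and nothing more; $U_t$ is lower semianalytic by Lemma~\ref{le:UisLSA}). A set of the form $\{f\geq g\}$ with $f,g$ lower semianalytic is in general neither analytic nor coanalytic: for $\{f-g\geq 0\}$ to be analytic one needs $f-g$ \emph{upper} semianalytic, i.e.\ $\psi$ upper semianalytic, which is false here (infima over analytic sets do not preserve upper semianalyticity). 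A second, compounding problem is that $y(\omega)=x+H\sint S_t(\omega)$ is only universally measurable (since $H$ is predictable, hence $\cF_{t-1}$-measurable in the universal completion), and precomposition with a universally measurable map destroys lower semianalyticity; so even the individual functions $\omega\mapsto\psi(\omega,y(\omega),h)$ and $\omega\mapsto U_t(\omega,y(\omega))$ appearing in your definition of $G$ are no longer lower semianalytic. Consequently the Jankov--von Neumann theorem, which requires an analytic graph, cannot be applied to $G$.

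The paper's proof is built precisely to avoid this. It never tries to make the argmax set analytic; instead it shows that $\phi(\omega,x,h)=\inf_{P\in\cP_t(\omega)}E_P[U_{t+1}(\omega\otimes_t\cdot,x+h\DS_{t+1}(\omega,\cdot))]$ is jointly $\cF_t\otimes\cB(\R)\otimes\cB(\R^d)$-measurable by a Carath\'eodory-type argument: universal measurability in $\omega$ for fixed $(x,h)$ (from lower semianalyticity) combined with upper semicontinuity in $(x,h)$ for fixed $\omega$ (Fatou's lemma, using that $U_{t+1}$ is bounded from above and upper semicontinuous in its second argument), as in \cite[Lemma~4.12]{BouchardNutz.13}. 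The same two-variable argument makes $U_t$ product-measurable, so the graph of $\Phi(\omega)=\{h:\phi(\omega,x+H\sint S_t(\omega),h)=U_t(\omega,x+H\sint S_t(\omega))\}$ lies in $\cF_t\otimes\cB(\R^d)$, and the selector is obtained from a selection theorem valid for such product-measurable graphs (the corollary and scholium of \cite[Theorem~5.5]{Leese.75}), not from Jankov--von Neumann. Note that this is also where the boundedness-from-above hypothesis on $U$ earns its keep in this lemma; your proposal only uses it to make \eqref{eq:int1} automatic. The remaining ingredients of your proposal (nonemptiness of the sections via Lemma~\ref{le:NAlocal}, Lemma~\ref{le:supOverQsame} and Theorem~\ref{th:existence1underInt}; the separate, universally measurable correction on the set where admissibility fails, as in Lemma~\ref{le:Hxpointwise}) match the paper and are fine.
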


\begin{proof}
  We first show that $U_t$ is $\cF_t\otimes \cB(\R)$ measurable; recall that $\cF_t$ is the universal $\sigma$-field of $\Omega_t$. Indeed, we know from Lemma~\ref{le:UisLSA} that $U_t(\cdot,x)$ is $\cF_t$-measurable for fixed $x\in\R$ and that $U_t(\omega,\cdot)$ is continuous on $[0,\infty)$ for fixed $\omega\in\Omega_t$. Thus, $U$ is product-measurable as a function on $\Omega_t\times [0,\infty)$. As $U_t(\cdot,x)=-\infty$ for $x<0$, it follows that $U_t$ is product-measurable as a function on $\Omega_t\times \R$ as claimed.
  Next, we show that the function
  \[
    \phi(\omega,x,h):=\inf_{P\in\cP_t(\omega)} E_P[U_{t+1}(\omega\otimes_t\cdot, x+h\DS_{t+1}(\omega,\cdot))]
  \]
  is $\cF_t\otimes\cB(\R)\otimes \cB(\R^d)$-measurable.
  Indeed, for fixed $(x,h)\in[0,\infty)\times\R^d$, it follows from the proof of Lemma~\ref{le:UisLSA} that $\omega\mapsto \phi(\omega,x,h)$ is lower semianalytic and in particular universally measurable. On the other hand, $U_{t+1}(\tilde{\omega},\cdot)$ is upper semicontinuous for any $\tilde{\omega}\in\Omega_{t+1}$. Since $U_{t+1}$ is bounded from above, an application of Fatou's lemma yields that
  $(x,h)\mapsto \phi(\omega,x,h)$ is upper semicontinuous for every $\omega\in\Omega_t$. It now follows as in~\cite[Lemma~4.12]{BouchardNutz.13} that
  $\phi$ is $\cF_t\otimes\cB(\R)\otimes \cB(\R^d)$-measurable.

  Fix $x\geq0$ and consider the set-valued mapping
  \[
   \Phi(\omega):=\{h\in\R^d:\, \phi(\omega,x+ H\sint S_t(\omega),h) = U_t(\omega,x+ H\sint S_t(\omega)) \},\quad\omega\in\Omega_t.
  \]
  By the above, its graph is in $\cF_t\otimes \cB(\R^d)$. Thus, $\Phi$ admits an $\cF_t$-measurable selector $\tilde{h}_t$ on the universally measurable set $\{\Phi\neq\emptyset\}$; cf.\ the corollary and scholium of \cite[Theorem~5.5]{Leese.75}. We extend $\tilde{h}_t$ by setting $\tilde{h}_t:=0$ on $\{\Phi\neq\emptyset\}$. Moreover, Theorem~\ref{th:existence1underInt} and Lemma~\ref{le:supOverQsame} show that  $\Phi(\omega)\neq\emptyset$ for $\cP$-quasi-every $\omega\in\Omega_t$, so that $\tilde{h}_t$ solves~\eqref{eq:optimalhSelectorDef} $\cP$-q.s.

  Let $B$ be the set  of all $\omega\in\Omega_t$ where we do not have $x+ H\sint S_t(\omega)+ \tilde{h}_t(\omega)\DS_{t+1}(\omega,\cdot)\geq0$ $\cP_t(\omega)$-q.s. As in the proof of Lemma~\ref{le:Hxpointwise}, $B$ is universally measurable, so that $\hat{h}_t:=\tilde{h}_t\1_{B^c}$ is still universally measurable. Moreover, as
  \[
    \inf_{P\in\cP_t(\omega)} E_P[U_{t+1}(\omega\otimes_t\cdot, x+ H\sint S_t(\omega)+ \tilde{h}_t(\omega)\DS_{t+1}(\omega,\cdot))]=-\infty,\quad \omega\in B,
  \]
  $\hat{h}_t$ is still satisfies~\eqref{eq:optimalhSelectorDef} $\cP$-q.s.\ and the proof is complete.
\end{proof}

\begin{lemma}\label{le:epsOptimalP}
  Let $t\in\{0,1,\dots,T-1\}$, $H\in\cH_x$ and
  \[
    I_t(\omega):= \inf_{P\in\cP_t(\omega)} E_P[U_{t+1}(\omega\otimes_t\cdot, x+ H\sint S_{t+1}(\omega,\cdot))],\quad \omega\in\Omega_t.
  \]
  Given $\eps>0$, there exists a universally measurable kernel $P_t^\eps: \Omega_t\to \fP(\Omega_1)$ such that $P_t^\eps(\omega)\in\cP_t(\omega)$ for all $\omega\in\Omega_t$ and
  \[
    E_{P_t^\eps(\omega)}[U_{t+1}(\omega\otimes_t\cdot, x+ H\sint S_{t+1}(\omega,\cdot))] \leq
  \begin{cases}
    I_t(\omega) + \eps & \text{if } I_t(\omega)>-\infty, \\
    -\eps^{-1} & \text{if } I_t(\omega)=-\infty.
  \end{cases}
  \]
\end{lemma}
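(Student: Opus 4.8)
The key is to apply the $\eps$-optimal measurable selection theorem for lower semianalytic criteria over analytic-graph multifunctions, \cite[Proposition~7.50, p.\,184]{BertsekasShreve.78}, ``upstream'' of the data coming from $H$. Set
\[
  \psi(\omega,P,y,h):=E_P\big[U_{t+1}(\omega\otimes_t\cdot,\; y+h\DS_{t+1}(\omega,\cdot))\big],\qquad (\omega,P,y,h)\in\Omega_t\times\fP(\Omega_1)\times\R\times\R^d,
\]
which is lower semianalytic by the computation carried out in the proof of Lemma~\ref{le:UisLSA} (note that $U_{t+1}$ is lower semianalytic on $\Omega_{t+1}\times\R$ after the standard extension by $-\infty$), and let $\phi(\omega,y,h):=\inf_{P\in\cP_t(\omega)}\psi(\omega,P,y,h)$ be the function already appearing in the proofs of Lemmas~\ref{le:UisLSA} and~\ref{le:optimalhSelector}. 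Since $U_{t+1}$ is bounded from above, $\phi\le\sup U<\infty$.

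Viewing $D:=\{(\omega,y,h,P):\,(\omega,P)\in\graph(\cP_t)\}$ --- a permuted copy of $\graph(\cP_t)\times\R\times\R^d$ --- as an analytic subset of $(\Omega_t\times\R\times\R^d)\times\fP(\Omega_1)$ and $\psi$ (re-coordinatized) as a lower semianalytic cost on $D$, the partial infimum over the fibres is precisely $\phi$, and the projection onto $\Omega_t\times\R\times\R^d$ is onto since $\cP_t(\omega)\neq\emptyset$ for every $\omega$. Hence \cite[Proposition~7.50]{BertsekasShreve.78} provides, for the given $\eps>0$, a universally measurable map $\Phi:\Omega_t\times\R\times\R^d\to\fP(\Omega_1)$ with $\Phi(\omega,y,h)\in\cP_t(\omega)$ for all $(\omega,y,h)$ and
\[
  \psi\big(\omega,\Phi(\omega,y,h),y,h\big)\le
  \begin{cases}
    \phi(\omega,y,h)+\eps, & \phi(\omega,y,h)>-\infty,\\
    -\eps^{-1}, & \phi(\omega,y,h)=-\infty
  \end{cases}
\]
(the two pieces of $\Phi$ supplied by that proposition, on the analytic sets $\{\phi>-\infty\}$ and $\{\phi=-\infty\}$, paste to a universally measurable map on all of $\Omega_t\times\R\times\R^d$).

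It then remains to compose. Because $H$ is predictable and $S$ is Borel, the map $\omega\mapsto\big(\omega,\;x+H\sint S_t(\omega),\;H_{t+1}(\omega)\big)$ is $\cF_t$-measurable, so $P_t^\eps(\omega):=\Phi\big(\omega,\;x+H\sint S_t(\omega),\;H_{t+1}(\omega)\big)$ is a universally measurable kernel with $P_t^\eps(\omega)\in\cP_t(\omega)$ for every $\omega\in\Omega_t$. Using $x+H\sint S_{t+1}(\omega,\cdot)=\big(x+H\sint S_t(\omega)\big)+H_{t+1}(\omega)\DS_{t+1}(\omega,\cdot)$ one has $E_{P_t^\eps(\omega)}[U_{t+1}(\omega\otimes_t\cdot,\,x+H\sint S_{t+1}(\omega,\cdot))]=\psi(\omega,P_t^\eps(\omega),\,x+H\sint S_t(\omega),\,H_{t+1}(\omega))$ and $\phi(\omega,\,x+H\sint S_t(\omega),\,H_{t+1}(\omega))=I_t(\omega)$, so the asserted estimate is exactly the displayed property of $\Phi$ evaluated at $(y,h)=(x+H\sint S_t(\omega),H_{t+1}(\omega))$.

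The one place any care is needed --- and what I expect to be the only real obstacle --- is to run the selection with $y,h$ kept as free Euclidean parameters, so that the criterion $\psi$ is lower semianalytic and \cite[Proposition~7.50]{BertsekasShreve.78} applies, and to substitute the merely universally measurable quantities $x+H\sint S_t(\cdot)$ and $H_{t+1}(\cdot)$ only afterwards; substituting them first would leave the criterion only universally measurable and one could no longer invoke the selection theorem. This is the same device already used in the proof of Lemma~\ref{le:optimalhSelector}, and everything else is routine.
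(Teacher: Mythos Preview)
Your proof is correct and follows essentially the same approach as the paper's: introduce the lower semianalytic function $\psi$ with free Euclidean parameters $(y,h)$, apply \cite[Proposition~7.50]{BertsekasShreve.78} over the analytic graph of $\cP_t$ to obtain a universally measurable $\eps$-optimal selector $\Phi(\omega,y,h)\in\cP_t(\omega)$, and then compose with the universally measurable data $(x+H\sint S_t(\cdot),H_{t+1}(\cdot))$. Your write-up is in fact more explicit than the paper's about why the substitution of $H$ must come \emph{after} the selection step.
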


\begin{proof}
  The function $(\omega,P,x,h)\mapsto E_P[U_{t+1}(\omega\otimes_t\cdot, x+ h\DS_{t+1}(\omega,\cdot))]$ is lower semianalytic by the proof of Lemma~\ref{le:UisLSA}. As the graph of $\cP_t$ is analytic, the Jankov--von Neumann Theorem in the form of \cite[Proposition~7.50, p.\,184]{BertsekasShreve.78} implies that there exists a universally measurable mapping $(\omega,x,h)\mapsto \tilde{P}_t^\eps(\omega,x,h)\in\cP_t(\omega)$ such that
  \[
    E_{\tilde{P}_t^\eps(\omega,x,h)}[U_{t+1}(\omega\otimes_t\cdot, x+ h\DS_{t+1}(\omega,\cdot))] \leq
  \begin{cases}
    I_t(\omega) + \eps & \text{if } I_t(\omega)>-\infty, \\
    -\eps^{-1} & \text{if } I_t(\omega)=-\infty.
  \end{cases}
  \]
  The composition $\omega\mapsto P_t^\eps(\omega):=\tilde{P}_t^\eps(\omega,x+H\sint S_t(\omega),H_{t+1}(\omega))$ has the desired properties.
\end{proof}

\begin{proof}[Proof of Theorem~\ref{th:multiperiodBdd}.]
  Let $\hat{h}_0$ be an optimal portfolio for $\inf_{P\in\cP_0}E_P[U_1(x+h\DS_1)]$ as in Lemma~\ref{le:optimalhSelector}, and set $\hat{H}_1:=\hat{h}_0$. Proceeding recursively, use Lemma~\ref{le:optimalhSelector} to define the strategy $\omega\mapsto \hat{h}_t(\omega)$ for $\inf_{P\in\cP_t(\omega)} E_P[U_{t+1}(\omega\otimes_t\cdot, x+\hat{H}\sint S_t + h\DS_{t+1}(\omega,\cdot))]$ and set $\hat{H}_{t+1}:=h_t$, for $t=1,\dots,T-1$. We then have $\hat{H}\in\cH_x$ by Lemma~\ref{le:Hxpointwise}. To establish that $\hat{H}$ is optimal, we first show that
  \begin{equation}\label{eq:proofhatHoptimal}
    \inf_{P\in\cP} E_P[U_T(x+\hat{H}\sint S_{T})] \geq U_0(x).
  \end{equation}
  Let $t\in\{0,\dots,T-1\}$. By the definition of $\hat{H}$, we have
  \[
    \inf_{P'\in\cP_t(\omega)}E_{P'}[U_{t+1}(\omega\otimes_t\cdot, x+\hat{H}\sint S_{t+1}(\omega,\cdot))] = U_t(\omega,x+\hat{H}\sint S_t(\omega))
  \]
  for all $\omega$ outside a $\cP$-polar set.
  Let $P\in\cP$; then $P=P_0\otimes  \cdots \otimes P_{T-1}$ for some selectors $P_s$ of $\cP_s$, $s=0,\dots,T-1$ and we conclude via Fubini's theorem that
  \begin{align*}%
    E_{P} [U_{t+1}(x+\hat{H}\sint S_{t+1})]
    & = E_{(P_0\otimes \cdots \otimes P_{t-1})(d\omega)} [ E_{P_t(\omega)}[U_{t+1}(\omega\otimes_t\cdot,  x+\hat{H}\sint S_{t+1}(\omega,\cdot))] \\
    & \geq E_{P_0\otimes \cdots \otimes P_{t-1}} [U_t(x+\hat{H}\sint S_t)]\\
    & = E_{P} [U_t(x+\hat{H}\sint S_t)].
  \end{align*}
  A repeated application of this inequality shows that $E_P[U_{T}(x+\hat{H}\sint S_{T})] \geq U_0(x)$, and since $P\in\cP$ was arbitrary, our claim~\eqref{eq:proofhatHoptimal} follows.

  To conclude that $\hat{H}$ is optimal, it remains to prove that
  \[
    U_0(x)\geq \sup_{H\in \cH_x} \inf_{P\in\cP} E_P[U(x+H \sint S_T)]=:u(x).
  \]
  To this end, we fix an arbitrary $H\in\cH_x$ and first show that
  \begin{equation}\label{eq:proofDPPindClaim}
    \inf_{P\in\cP} E_P[U_t(x+H\sint S_t)] \geq \inf_{P\in\cP} E_P[U_{t+1}(x+H\sint S_{t+1})], \quad t=0,1,\dots,T-1.
  \end{equation}
  Let $\eps>0$ and let $P^\eps_t$ be an $\eps$-optimal selector as in Lemma~\ref{le:epsOptimalP}.
  Then, for $\omega$ outside a $\cP$-polar set, Lemma~\ref{le:supOverQsame} yields that
  \begin{align*}
    &E_{P^\eps_t(\omega)} [U_{t+1}(\omega\otimes_t\cdot, x+H\sint S_{t+1}(\omega,\cdot))] - \eps \\
    &\leq (-\eps^{-1}) \vee \inf_{P\in\cP_t(\omega)} E_P[U_{t+1}(\omega\otimes_t\cdot, x+H\sint S_{t+1}(\omega,\cdot))] \\
    &\leq (-\eps^{-1}) \vee\sup_{h\in D_{t,x'(\omega)}(\omega)} \inf_{P\in\cP_t(\omega)} E_P[U_{t+1}(\omega\otimes_t\cdot, x+H\sint S_t(\omega) + h\DS_{t+1}(\omega,\cdot))]\\
    &= (-\eps^{-1}) \vee U_t(\omega, x+H\sint S_t(\omega)),
  \end{align*}
  where $x'(\omega):=x+H\sint S_t(\omega)$.
  Given $P\in\cP$, we thus have
  \begin{align*}
    E_P[(-\eps^{-1}) \vee U_t(x+H\sint S_t)]
    &\geq E_{P\otimes_t P^\eps_t} [U_{t+1}(x+H\sint S_{t+1})] - \eps \\
    &\geq \inf_{P'\in\cP}E_{P'} [U_{t+1}(x+H\sint S_{t+1})] -\eps.
  \end{align*}
  As $\eps>0$ and $P\in\cP$ were arbitrary, \eqref{eq:proofDPPindClaim} follows.
  Noting the trivial equality $U_0(x) = \inf_{P\in\cP} E_P[U_0(x+H\sint S_0)]$, a repeated application of~\eqref{eq:proofDPPindClaim} yields
  \[
    U_0(x) \geq \inf_{P\in\cP} E_P[U_1(x+H\sint S_1)] \geq \cdots \geq \inf_{P\in\cP} E_P[U_T(x+H\sint S_T)].
  \]
  As $H\in\cH_x$ was arbitrary, it follows that $U_0(x)\geq u(x)$, and in view of~\eqref{eq:proofhatHoptimal}, this shows that $\hat{H}$ is optimal.
\end{proof}

\begin{proof}[Proof of Example~\ref{ex:bddS}.]
  The assumption that $S$ is uniformly nondegenerate clearly implies that $\NA(\cP_t(\omega))$ holds for all $(t,\omega)$; in particular, Lemma~\ref{le:NAlocal} shows that $\NA(\cP)$ holds. Moreover, using that $S$ is bounded and uniformly nondegenerate, we obtain a universal constant $a$ such that $|x+ H\sint S_T|\leq a$\ for all $H\in\cH_x$.
  On the other hand, by a scaling argument similar to~\eqref{eq:scalingIneq}, the assumption that $U(\cdot,1)$ is bounded from above implies that $U(\cdot,y)\leq c_y$ for some $c_y\in\R$, for all $y>0$.
  Together, it follows that $U(x+H \sint S_T)\leq c_a$ for all $H\in\cH_x$. Using these and similar arguments, we can go through the proof of Theorem~\ref{th:multiperiodBdd} with minor modifications.
\end{proof}

\newcommand{\dummy}[1]{}

\end{document}